\def\compactify{\itemsep=0pt \topsep=0pt \partopsep=0pt \parsep=0pt}
\newtheorem{theorem}{Theorem}[section]
\newtheorem{lemma}[theorem]{Lemma}
\newtheorem{claim}[theorem]{Claim}
\newtheorem{corollary}[theorem]{Corollary}
\newtheorem{definition}[theorem]{Definition}
\newcommand{\ignore}[1]{}
\def\Holder{H\"older}
\newcommand{\zo}{\{0,1\}}
\newcommand{\N}{{\mathbb{N}}}
\newcommand{\R}{{\mathbb{R}}}
\newcommand{\eps}{\epsilon}
\newcommand{\Var}[1]{{\bold{Var}\left[#1\right]}}
\newcommand{\Enop}[1]{{\mathbb{E}{#1}}}
\newcommand{\E}[1]{{\mathbb{E}\left[{#1}\right]}}
\newcommand{\EE}[2]{{\mathbb{E}_{#1}\left[{#2}\right]}}
\DeclareMathOperator{\W}{{\mathcal W}}
\DeclareMathOperator{\median}{median}
\def\eqdef{\stackrel{\text{\tiny\rm def}}{=}}
\newcommand{\aset}[1]{\{ #1 \}}
\DeclareMathOperator{\dx}{dx}
\DeclareMathOperator{\du}{du}
\title{Streaming Algorithms via Precision Sampling}
\author{
Alexandr Andoni\thanks{Work done in part while the author was a
  postdoctoral researcher at Princeton University/CCI, supported by NSF CCF 0832797.}\\
 Microsoft Research
\and
Robert Krauthgamer\thanks{Supported in part by The Israel Science Foundation (grant \#452/08), and by a Minerva grant.}\\
Weizmann Institute
\and
Krzysztof Onak\thanks{Supported in part by
 a Simons Postdoctoral Fellowship and NSF grants 0732334 and 0728645.}\\
CMU
}
\definecolor{darkgreen}{rgb}{0,.4,0}
\newcommand{\rnote}[1]{}
\newcommand{\aanote}[1]{}
\newcommand{\knote}[1]{}
\newcommand{\myparagraph}[1]{\medskip\noindent{\bf #1}}
\begin{document}

\setcounter{page}{0}
 
\maketitle
\thispagestyle{empty}

\begin{abstract}
A technique introduced by Indyk and Woodruff [STOC 2005] has
inspired several recent advances in data-stream algorithms. 
We show that a number of these results follow easily from
the application of a single probabilistic method 
called {\em Precision Sampling}. 
Using this method, we obtain simple data-stream algorithms that maintain
a randomized sketch of an input vector $x=(x_1,\ldots x_n)$, 
which is useful for the following applications:
\begin{itemize}
\item
Estimating the $F_k$-moment of $x$, for $k>2$.
\item
Estimating the $\ell_p$-norm of $x$, for $p\in[1,2]$, with small update time.
\item
Estimating cascaded norms $\ell_p(\ell_q)$ for all $p,q>0$.
\item
$\ell_1$ sampling, where the goal is to produce an element $i$ with
probability (approximately) $|x_i|/\|x\|_1$. It extends to similarly
defined $\ell_p$-sampling, for $p\in [1,2]$. 
\end{itemize}
For all these applications the algorithm is essentially the same: 
scale the vector $x$ entry-wise by a well-chosen random vector, and run a
heavy-hitter estimation algorithm on the resulting vector.
Our sketch is a linear function of $x$,  thereby allowing general
updates to the vector $x$. 

Precision Sampling itself addresses the problem of estimating
a sum $\sum_{i=1}^n a_i$ from weak estimates of each real $a_i\in[0,1]$.
More precisely, the estimator first chooses a desired precision
$u_i\in(0,1]$ for each $i\in[n]$,
and then it receives an estimate of every $a_i$ within additive $u_i$. 
Its goal is to provide a good approximation to $\sum a_i$
while keeping a tab on the ``approximation cost'' $\sum_i (1/u_i)$.
Here we refine previous work [Andoni, Krauthgamer, and Onak, FOCS 2010]
which shows that as long as $\sum a_i=\Omega(1)$, a good multiplicative 
approximation can be achieved using total precision of only $O(n\log n)$.
\end{abstract}

\newpage

\section{Introduction}

A number of recent developments in algorithms for data streams 
have been inspired, at least in part, by a technique devised 
by Indyk and Woodruff \cite{IW05} to obtain
near-optimal space bounds for estimating $F_k$ moments, for
$k>2$. Indeed, refinements and modifications of that technique were used for
designing better or new algorithms for applications such as:
$F_k$ moments~\cite{Ganguly-Fk06} (with better bounds than~\cite{IW05}), entropy
estimation~\cite{BG-entropy06}, cascaded
norms~\cite{GBD-hybrid08, JW-cascaded},
Earthmover Distance~\cite{ADIW-EMD}, $\ell_1$ sampling algorithm
\cite{MW-l1sampling}, distance to independence of two random
variables~\cite{BO-independence10}, and even, more generically, a
characterization of ``sketchable''  functions of
frequencies~\cite{BO-zeroonelaw10}. While clearly very powerful, the
Indyk--Woodruff technique is somewhat technically involved, and
hence tends to be cumbersome to work with.

In this paper, we show an alternative design for the Indyk--Woodruff
technique, resulting in a simplified algorithm for several of the above
applications. Our key ingredient, dubbed the {\em Precision
Sampling Lemma (PSL)}, is a probabilistic method, concerned with 
estimating the sum of a number of real quantities. The PSL
was introduced in~\cite[Lemma 3.12]{AKO-edit}, in an unrelated context, of
{\em query-efficient} algorithms (in the sense of property testing) 
for estimating the edit distance.

Our overall contribution here is providing a generic approach that
leads to simplification and unification of a family of
data-stream algorithms. Along the way we obtain new and improved bounds
for some applications. We also give a slightly improved version of
the PSL.

In fact, all our algorithms comprise of the following two simple
steps: multiply the stream by well-chosen random numbers (given by
PSL), and then solve a certain heavy-hitters problem. Interestingly,
each of the two steps (separately) either has connections to or is a
well-studied problem in the literature of data streams.  Namely, our
implementation of the first step is somewhat similar to {\em Priority
  Sampling} \cite{DLT07-prioritySampling}, as discussed in
Section~\ref{sec:prioritySampling}. The second step, the heavy-hitters
problem, is a natural streaming primitive, studied at least since the
work of Misra and Gries~\cite{MG82}. It would be hard to list all the
relevant literature for this problem concisely; instead we refer the
reader, for example, to the survey by Muthukrishnan~\cite{Muthu-book}
and the CountMin wiki site~\cite{countminsketch} and the references
therein.

\subsection{Streaming Applications}

We now describe the relevant streaming applications in detail. In most cases, the
input is a vector $x\in \R^n$, which we maintain under stream
updates. An update has the form $(i,\delta)$, which means that $\delta \in \R$ 
is added to $x_i$, the $i$th coordinate of $x$.\footnote{We make a standard
discretization assumption that all numbers have a finite precision, and
in particular, $\delta\in \{-M,-M+1,\ldots, M-1,M\}$,
for $M=n^{O(1)}$.} The
goal is to maintain a sketch of $x$ of small size (much smaller than
$n$), such that, at the end of the stream, the algorithm outputs some
function of $x$, depending on the actual problem in mind. Besides the
space usage, another important complexity measure is the update
time --- how much time it takes to modify the sketch to reflect an update $(i,\delta)$.

We study the following problems.\footnote{Since we work in the general
  update framework, we will not be presenting
  the literature that is concerned with restricted types of updates,
  such as positive updates $\delta>0$.} 
For all these problems, the algorithm is essentially the same (see the
beginning of Section~\ref{sec:basicStreamingApps}). 
All space bounds are in terms of words, each having $O(\log n)$ bits.

\begin{description}
\item[$\mathbf{F_k}$ moment estimation, for $\mathbf{k>2}$:] 
The goal is to produce a $(1+\eps)$ factor approximation to the $k$-th moment
of $x$, i.e., $\|x\|_k^k=\sum_{i=1}^n |x_i|^k$. 
The first sublinear-space algorithm for $k > 2$, due to \cite{AMS}, 
gave a space bound $n^{1-1/k}\cdot (\eps^{-1}\log n)^{O(1)}$,
and further showed the first polynomial lower bound for $k$ sufficiently large.
A lower bound of $\Omega(n^{1-2/k})$ was shown in \cite{ChakrabartiKS03,BJKS},
and it was (nearly) matched by Indyk and Woodruff~\cite{IW05}, who
gave an algorithm using space $n^{1-2/k}\cdot (\eps^{-1}\log n)^{O(1)}$. Further
research reduced the space bound to essentially 
$O(n^{1-2/k}\cdot \eps^{-2-4/k}\log^2 n)$ ~\cite{Ganguly-Fk06,MW-l1sampling}
(see \cite{MW-l1sampling} for multi-pass bounds).
Independently of our work, this bound was improved by a roughly
$O(\log n)$  factor in \cite{BO10}.

Our algorithm for this problem appears in Section~\ref{sec:FkMoments},
and improves the space usage over these bounds. Very recently,
following the framework introduced here, \cite{Ganguly-personal}
reports a further improvement in space for a certain regime of
parameters.

\item[$\mathbf{\ell_p}$-norm estimation, for $\mathbf{p\in[1,2]}$:]
The goal is to produce a $1+\eps$ factor approximation to $\|x\|_p$,
just like in the previous problem.%
\footnote{The difference in notation ($p$ vs.\ $k$) is partly due to
  historical reasons: the $\ell_p$ norm for
  $p\in[1,2]$ has been usually studied separately from the $F_k$ moment
  for $k>2$, having generally involved somewhat different
  techniques and space bounds. }
The case $p=2$, i.e., $\ell_2$-norm
estimation was solved in \cite{AMS}, which gives a space bound of
$O(\eps^{-2}\log n)$. 
It was later shown in \cite{I00b} how to estimate $\ell_p$ norm for all
$p\in(0,2]$, using $p$-stable distributions, in $O(\eps^{-2}\log n)$
space. Further research aimed to get a tight bound and
to reduce the update time (for small $\eps$) 
from $\Omega(\eps^{-2})$ to $\log^{O(1)}n$
(or even $O(1)$ for $p=2$), 
see, e.g.,~\cite{NW-fastsketches10,
  KNW-exact10,Li-estimators08, GC-moments07} and references therein. 

Our algorithm for this problem appears in Section~\ref{sec:ell1-norm}
for $p=1$ and Section~\ref{sec:ellp-norm} for all $p\in[1,2]$. The
algorithm has an improved update time, over that of~\cite{GC-moments07}, for
$p\in(1,2]$, and uses comparable space, $O(\eps^{-2-p}\log^2 n)$.
We note that, for $p=1$, our space bound is worse
than that of~\cite{NW-fastsketches10}. 
Independently of our work, fast space-optimal algorithms for all $p\in(0,2)$
were recently obtained in \cite{KNPW10-otimalUpdate}.

\item[Mixed/cascaded norms:]
The input is a matrix $x\in \R^{n\times n}$, and the goal
is to estimate the $\ell_p(\ell_q)$ norm, defined as 
$\|x\|_{p,q}=\left(\sum_{i\in[n]} (\sum_{j\in[n]} |x_{i,j}|^q)^{p/q}\right)^{1/p}$,
for $p,q\ge 0$. Introduced
in~\cite{CM05}, this problem
generalizes the $\ell_p$-norm/$F_k$-moment estimation questions, and 
for various values of $p$ and $q$, it has particular
useful interpretations, see~\cite{CM05} for examples. Perhaps the
first algorithm, applicable to some regime of
parameters, appeared in \cite{GBD-hybrid08}. Further progress on the
problem was accomplished 
in~\cite{JW-cascaded}, which obtains near-optimal bounds
for a large range of values of $p,q\ge 0$
(see also~\cite{MW-l1sampling} and~\cite{GBD-hybrid08}).

We give in Section~\ref{sec:mixednorms} algorithms for all parameters
$p,q>0$, and obtain bounds that are tight up to $(\eps^{-1}\log
n)^{O(1)}$ factors.  In particular, we obtain the first algorithm for
the regime $q>p>2$ --- no such (efficient) algorithm was previously
known. We show that the space complexity is controlled by a metric
property, which is a generalization of the {\em $p$-type constant} of
$\ell_q$. Our space bounds fall out directly from bounds on this property.

\item[$\mathbf{\ell_p}$-sampling, for $\mathbf{p\in[1,2]}$:]
Here, the goal of the algorithm is to produce an index $i\in[n]$
sampled from a distribution $D_x$ that depends on $x$, as opposed to
producing a fixed function of $x$. In particular, the (idealized) goal is to
produce an index $i\in[n]$ where each $i$ is returned with probability
$|x_i|^p/\|x\|_p^p$. We meet this goal in an approximate fashion: 
there exists some approximating distribution $D'_x$ on $[n]$, where
$D'_x(i)=(1\pm \eps)|x_i|/\|x\|_1\pm 1/n^2$ (the exponent 2 here is
arbitrary), such that the algorithm outputs $i$ drawn from the
distribution $D'_x$. 
Note that the problem would
be simple if the stream had only insertions (i.e., $\delta\ge0$
always); so the challenge is to be able to support both positive and 
negative updates to the vector $x$.

The $\ell_p$-sampling problem was introduced in \cite{MW-l1sampling},
where it is shown that the $\ell_p$-sampling problem is a useful
building block for other streaming problems, including cascaded norms,
heavy hitters, and moment estimation. The algorithm in
\cite{MW-l1sampling} uses $(\eps^{-1}\log n)^{O(1)}$ space.

Our algorithm for the $\ell_p$-sampling problem, for $p\in[1,2]$,
appears in Section~\ref{sec:ellp-sampling}.  It improves the space to
$O(\eps^{-p}\log^3 n)$.  Very recently, following the framework
introduced here, \cite{JST10} further improve the space bound to a {\em
  near-optimal} bound, and extend the algorithm to $p\in[0,1]$.
\end{description}

All our algorithms maintain a linear sketch $L:\R^n\to \R^S$
(i.e. $L$ is a linear function), where
$S$ is the space bound (in words, or $O(S\log n)$ in bits). Hence, all
the updates may be implemented 
using the linearity: $L(x+\delta e_i)=Lx+\delta\cdot Le_i$,
where $e_i$ is the $i$th standard basis vector.

\subsection{Precision Sampling} \label{sec:PSL}

We now describe the key primitive used in all our algorithms, the
Precision Sampling Lemma (PSL). It has originally
appeared in~\cite{AKO-edit}. The present version is improved in two
respects: it has better bounds and is streaming--friendly.

PSL addresses a variant of the standard sum--estimation problem, 
where the goal is to estimate the sum $\sigma\eqdef \sum_i a_i$ of $n$ 
unknown quantities $a_i\in[0,1]$.
In the standard sampling approach, one randomly samples a set of indices
$I\subset [n]$, 
and uses these $a_i$'s to compute an estimate 
such as $\tfrac{n}{|I|} \sum_{i\in I}a_i$. 
{\em Precision sampling} considers a different scenario, 
where 
the estimation algorithm chooses a sequence of \emph{precisions}
$u_i\in(0,1]$ (without knowing the $a_i$'s), and then obtains a
  sequence of estimates $\hat a_i$ that satisfy $|\hat a_i-a_i|\le
  u_i$, and it has to report an estimate for the sum $\sigma = \sum_i
  a_i$.  As it turns out from applications, producing an estimate with
  additive error $u_i$ (for a single $a_i$) incurs cost $1/u_i$, hence
  the goal is to achieve a good approximation to $\sigma$ while
  keeping tabs on the total cost (total precision) $\sum_i
  (1/u_i)$.\footnote{Naturally, in other application, other notions of
    cost may make more sense, and are worth investigating.}

To illustrate the concept, consider the case where $10\le\sigma\le 20$, and
one desires a $1.1$ multiplicative approximation to $\sigma$. 
How should one choose the precisions $u_i$? One approach is
to employ the aforementioned sampling approach: choose a random set of indices 
$I\subset [n]$ and assign to them a high precision, say $u_i=1/n$, 
and assign trivial precision $u_i=1$ to the rest of indices;
then report the estimate $\hat\sigma = \tfrac{n}{|I|}\sum_{i\in I} \hat a_i$.
This way, the error due to the adversary's response is at most 
$\tfrac{n}{|I|}\sum_{i\in I}|\hat a_i-a_i| \le 1$,
and standard sampling (concentration) bounds prescribe setting $|I|=\Theta(n)$.
The total precision becomes $\Theta(n\cdot |I|)=\Theta(n^2)$,
which is no better than naively setting all precisions 
$u_i=1/n$, which achieves total additive error $1$ using total precision $n^2$.
Note that in the restricted case where all $a_i\leq 40/n$, 
the sampling approach is better, because setting $|I|=O(1)$ suffices; 
however, in another restricted case where all $a_i\in\zo$,
the naive approach could fare better, if we set all $u_i=1/2$.
Thus, total precision $O(n)$ is possible in both cases, 
but by a different method.
We previously proved in~\cite{AKO-edit} that one can always choose
$w_i$ randomly such that $\sum w_i\leq O(n\log n)$ with constant
probability.

In this paper, we provide a more efficient version of PSL
(see Section \ref{sec:PSLproof} for details).
To state the lemma, we need a definition
that accommodates both additive and multiplicative errors.

\begin{definition} [Approximator]
\label{def:approximator}
Let $\rho>0$ and $f\in[1,2]$.
A {\em $(\rho,f)$-approximator} to $\tau>0$ is any quantity $\hat \tau$ 
satisfying
$
\tau/f-\rho\le \hat \tau\le f\tau+\rho.
$
(Without loss of generality, $\hat\tau\ge 0$.)
\end{definition}

The following lemma is stated in a rather general form.
Due to historical reasons, the lemma refers to precisions as $w_i\in[1,\infty)$,
which is identical to our description above via $w_i=1/u_i$.
Upon first reading, it may be instructive to consider the special case
$f=1$, and let $\rho=\eps>0$ be an absolute constant
(say $0.1$ to match our discussion above).

\begin{lemma}[Precision Sampling Lemma]
\label{lem:nonUniformSampling}
Fix an integer $n\ge 2$, a multiplicative error $\eps\in[1/n,1/3]$, and
an additive error $\rho\in[1/n,1]$. Then there exist a
distribution $\W$ on the real interval $[1,\infty)$
and a 
reconstruction algorithm $R$, 
with the following two properties.
\begin{description} \compactify
\item [Accuracy:]
Consider arbitrary $a_1,\ldots,a_n\in[0,1]$ and $f\in[1,1.5]$. 
Let $w_1,\ldots,w_n$ be chosen at random from $\W$
pairwise independently.%
\footnote{That is, for all $i<j$, the pair $(w_i,w_j)$ is distributed 
as $\W^2$.
}
Then with probability at least $2/3$,
when algorithm $R$ is given $\{w_i\}_{i\in[n]}$ and $\{\hat a_i\}_{i\in[n]}$ 
such that each $\hat a_i$ is an arbitrary $(1/w_i,f)$-approximator of $a_i$,
it produces $\hat \sigma\ge 0$ which is
a $(\rho,f\cdot e^{\eps})$-approximator to $\sigma \eqdef \sum_{i=1}^n a_i$.
\item [Cost:] There is $k=O(1/\rho\eps^2)$ such that 
the conditional expectation
$\EE{w\in \W}{w \mid M}\le O(k\log n)$ for some event $M=M(w)$
occurring with high probability. For every fixed $\alpha\in (0,1)$, we have $\EE{w\in
  \W}{w^\alpha}\le O(k^\alpha)$. 
The distribution
  $\W=\W(k)$ depends only on $k$.
\end{description}
\end{lemma}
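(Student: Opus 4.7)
The plan is to take $\W$ to be the heavy-tailed distribution of $w=k/u$ where $u$ is drawn uniformly from $(0,1]$, and to let the reconstruction algorithm $R$ be an inverse-probability-weighted (Horvitz--Thompson style) estimator with a fixed threshold $T$:
\[
\hat\sigma \;=\; \sum_{i=1}^n \max\bigl(\hat a_i,\; T/k\bigr)\cdot \mathbf{1}[\,w_i \hat a_i\ge T\,].
\]
Intuitively, index $i$ is ``sampled into the estimate'' precisely when the scaled observation $w_i\hat a_i$ exceeds $T$; in the noiseless case $\hat a_i=a_i$ the inclusion probability is $\min(ka_i/T,1)$ and the contribution on inclusion, $\max(a_i,T/k)$, exactly cancels it, giving $\E{X_i}=a_i$. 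Everything then reduces to controlling the bias introduced by the adversary and the variance coming from pairwise independence. The distribution depends only on the integer parameter $k$, as required.

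For the cost bounds, take $M$ to be the event $\{u \ge n^{-C}\}$ (equivalently $w\le kn^C$) for a sufficiently large constant $C$. Then $\Pr[M]\ge 1-n^{-C}$ and direct integration yields
\[
\E{w \mid M} \;=\; \frac{1}{\Pr[M]}\int_{n^{-C}}^{1} \frac{k}{u}\,du \;=\; O(k\log n),
\]
while for any fixed $\alpha\in(0,1)$ the fractional moment is $\E{w^\alpha}=k^\alpha\int_0^1 u^{-\alpha}\,du = k^\alpha/(1-\alpha) = O(k^\alpha)$; no truncation is needed here because $u^{-\alpha}$ is integrable at $0$.

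The main obstacle is the accuracy analysis, because the adversary may pick each $\hat a_i$ as an arbitrary function of $w_i$ within the $(1/w_i,f)$-approximator constraint, so even the set of indices participating in $\hat\sigma$ is adversarially controlled. I handle this via a sandwich argument: using $a_i/f-1/w_i\le\hat a_i\le fa_i+1/w_i$ one shows
\[
\mathbf{1}\bigl[w_i a_i \ge f(T+1)\bigr] \;\le\; \mathbf{1}\bigl[w_i\hat a_i\ge T\bigr] \;\le\; \mathbf{1}\bigl[w_i a_i \ge (T-1)/f\bigr],
\]
and, on inclusion, $\max(\hat a_i,T/k)$ lies between $a_i/f - O(T/k)$ and $fa_i + O(T/k)$. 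Integrating the product of an inclusion probability and a contribution over $u_i$ yields $\E{X_i}\in [a_i/f,\,fa_i]\cdot(1\pm O(1/T))$, so that choosing $T$ appropriately in terms of $\eps$ makes the total bias at most $f\cdot e^{O(\eps)}$ times $\sigma$, plus an $O(\rho)$ additive slack.

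For the concentration step, each $X_i$ is bounded and, when $a_i\le T/k$, behaves like a rare Bernoulli with $\E{X_i^2}=O((T/k)\,a_i)$; for $a_i>T/k$ it is essentially deterministic. Pairwise independence of the $w_i$'s makes variances add, giving $\mathrm{Var}(\hat\sigma)\le O((T/k)\,\sigma)$. Chebyshev's inequality then yields $|\hat\sigma-\E{\hat\sigma}|\le \eps\sigma+\rho$ with probability at least $2/3$ provided $k=\Omega(1/(\rho\eps^2))$, the $T$ factor being absorbed by a constant rescaling of $\eps$. Combining with the bias bound produces $\hat\sigma$ as a $(\rho,\,f\cdot e^{O(\eps)})$-approximator of $\sigma$; rescaling $\eps$ by a constant then gives the lemma as stated.
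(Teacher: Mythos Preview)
Your approach is correct and takes a somewhat different route from the paper, though the analytic skeleton is the same. The paper defines $\W$ as the law of $\max_{j\in[k]} 1/u_j$ for $k$ independent uniforms, and its reconstruction is a threshold-counting estimator $\hat\sigma = t\sum_{i,j} s_{i,j}$ with $s_{i,j}=\tfrac{1}{k}\mathbf{1}[u_{i,j}\le \hat a_i/t]$; it first analyzes a randomized algorithm $R'$ (which re-samples the $u_{i,j}$ given $w_i$) and then passes to a deterministic $R$ by taking conditional expectations and invoking the law of total variance to show the variance only drops. Your construction instead uses a single uniform per index ($w=k/u$) together with a Horvitz--Thompson style estimator, which is arguably cleaner and sidesteps the randomized intermediate and its derandomization entirely. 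Both proofs then proceed identically: sandwich the adversarially-controlled inclusion indicator between adversary-free surrogates, show each surrogate has mean in $[a_i/f,\,fa_i]\cdot e^{\pm O(1/T)}$ and variance $O((T/k)a_i)$, sum variances via pairwise independence, and finish by Chebyshev in the two regimes $\sigma\gtrless \rho/\eps$.

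One point to tighten in your write-up: because $\hat a_i$ may depend on the \emph{entire} vector $(w_1,\ldots,w_n)$, the $X_i$ themselves need not be pairwise independent, so the line ``pairwise independence of the $w_i$'s makes variances add, giving $\mathrm{Var}(\hat\sigma)\le O((T/k)\sigma)$'' is not literally valid for $\hat\sigma=\sum X_i$. The fix is exactly what you already set up: apply Chebyshev to $\sum_i \overline{X}_i$ and $\sum_i \underline{X}_i$ (each $\overline{X}_i,\underline{X}_i$ being a function of $w_i$ alone, hence pairwise independent), and then use $\underline{X}_i\le X_i\le \overline{X}_i$ pointwise. This is precisely how the paper handles it with its $\overline{s}_{i,j},\underline{s}_{i,j}$.
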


We emphasize that the probability $2/3$ above 
is over the choice of $\{w_i\}_{i\in[n]}$
and holds (separately) for every fixed setting of $\{a_i\}_{i\in[n]}$.
In the case where $R$ is randomized, the probability $2/3$
is also over the coins of $R$.
Note also that the precisions $w_i$ are chosen without knowing $a_i$,
but the estimators $\hat a_i$ are adversarial ---
each might depend on the entire $\{a_i\}_{i\in[n]}$ and $\{w_i\}_{i\in[n]}$,
and their errors might be correlated.

In our implementation, it turns out that the reconstruction algorithm
uses only $\hat{a}_i$'s which are (retrospectively) good approximation
to $a_i$ --- namely $\hat{a}_i\gg 1/w_i$ --- hence the adversarial effect is limited. For completeness, we also mention that, for $k=1$, the
distribution $\W=\W(1)$ is simply $1/u$ for a random $u\in[0,1]$. We
present the complete proof of the lemma in Section~\ref{sec:PSLproof}.

It is natural to ask whether the above lemma is tight. In
Section~\ref{sec:lowerBound}, we show a lower bound on $\EE{w\in
  \W}{w}$ in the considered setting, which matches our PSL bound up to
a factor of $1/\eps$. We leave it as an open question what is the best
achievable bound for PSL.

\subsection{Connection to {\em Priority} Sampling}
\label{sec:prioritySampling}

We remark that (our implementation of) Precision Sampling has some
similarity to {\em Priority Sampling} \cite{DLT07-prioritySampling}, which
is a scheme for the following problem.%
\footnote{The similarity is at the more technical level of applying
  the PSL in streaming algorithms, hence the foregoing discussion actually 
  refers to Sections~\ref{sec:PSLproof} and~\ref{sec:basicStreamingApps}.
}
We are given a
vector $x\in \R_+^n$ of positive weights (coordinates), and we want to maintain a sample of $k$
weights in order to be able to estimate sums of weights for an
arbitrary subset of coordinates, i.e., $\sum_{i\in I} x_i$ for arbitrary
sets $I\subseteq [n]$. Priority Sampling has been shown to
attain an essentially best possible variance for a sampling
scheme \cite{Sze06-DLPoptimality}. 

The similarity between the two sampling schemes is the following. In our main
approach, similarly to the approach in Priority Sampling, we take the vector
$x\in \R^n$, and consider a vector $y$ where $y_i=x_i/u_i$, for $u_i$
chosen at random from $[0,1]$. We are then interested
in heavy hitters of the vector $y$ (in $\ell_1$ norm). We obtain these
using the CountSketch/CountMin
sketch~\cite{CCF, CM05b}. In Priority Sampling,
one similarly extracts a set of $k$ heaviest coordinates of
$y$. However, one important
difference is that in Priority Sampling the weights (and updates) are positive, thus
making it possible to use Reservoir sampling-type techniques to obtain
the desired heavy hitters. In contrast, in our setting the weights
(and updates) may be negative,
and we need to extract the heavy hitters
approximately and hence post-process them differently. 

See also~\cite{CDKLT09-streamSampling} and the references therein for
streaming-friendly versions of Priority Sampling and other related sampling
procedures.\aanote{need to take a CLOSER LOOK!}

\section{Proof of the Precision Sampling Lemma}
\label{sec:PSLproof}

In this section we prove the Precision Sampling Lemma
(Lemma~\ref{lem:nonUniformSampling}). Compared to our previous version
of PSL from~\cite{AKO-edit}, this version has the following improvements: 
a better bound on $\EE{w\in \W}{w}$ (hence better total precision), 
it requires the $w_i$'s to be only pairwise independent (hence streaming-friendly), 
and a slightly simpler construction and analysis via its inverse $u=1/w$.  
We also show a lower bound in Section~\ref{sec:lowerBound}.

\myparagraph{The probability distribution $\W$.}
Fix $k={\zeta}/{\rho \eps^2}$ for sufficiently large constant $\zeta>0$.
The distribution $\W$ takes a random value $w\in [1,\infty)$ as
  follows: pick i.i.d.\ samples $u_1,\ldots,u_k$ from the uniform
  distribution $U(0,1)$, and set $w \eqdef 
\max_{j\in[k]} 1/u_j$.  Note that $\W$ depends on $k$ only.

\myparagraph{The reconstruction algorithms.}
The randomized reconstruction algorithm $R'$
gets as input $\{w_i\}_{i\in[n]}$ and $\{\hat a_i\}_{i\in[n]}$
and works as follows. For each $i\in[n]$, 
sample $k$ i.i.d.\ random variables, $u_{i,j}\in U(0,1)$ for $j\in[k]$,
conditioned 
on the event $\aset{w_i=\max_{j\in[k]}1/u_{i,j}}$.
Now define the ``indicators'' $s_{i,j}\in\{0,1/k\}$, for each $i\in[n],j\in[k]$, 
by setting 
\begin{equation*}
  s_{i,j}\eqdef\begin{cases}
    1/k & \text{if $u_{i,j}\le \hat a_i/t$ for $t\eqdef 4/\eps$}; 
    \\ 
    0 & \text{otherwise}.
  \end{cases}
\end{equation*}
Finally, algorithm $R'$ sets $s \eqdef \sum_{i\in [n], j\in[k]} s_{i,j}$ 
and reports $\hat\sigma \eqdef s\, t$
as an estimate for $\sigma=\sum_i a_i$. 
A key observation is that altogether, 
i.e., when we consider both the coins involved in the choice of $w_i$ from $\W$
as well as those used by algorithm $R'$,
we can think of $u_{i,1},\ldots,u_{i,k}$ as being chosen 
i.i.d.\ from $U(0,1)$.
Observe also that whenever $\hat a_i$ is a $(1/w_i,f)$-approximator to $a_i$,
it is also a $(u_{i,j}\,,f)$-approximator to $a_i$ for \emph{all} $j\in[k]$.

We now build a more efficient deterministic algorithm $R$ that
performs at least as well as $R'$. Specifically, $R$ does not generate
the $u_{i,j}$'s (from the given $w_i$'s), but rather sets $s_i \eqdef
\E{\sum_{j\in[k]}s_{i,j}\mid \min_{j\in [k]} u_{i,j}=1/w_i}$ and $s
\eqdef \sum_{i\in[n]} s_i$.  A simple calculation yields an explicit
formula, which is easy to compute algorithmically:
\begin{equation*}
  s_i = \begin{cases}
    \tfrac1k+\tfrac{k-1}{k}\cdot\tfrac{\hat a_iw_i/t-1}{w_i-1}; 
    & \text{if $\hat a_iw_i/t\ge 1$}\\
    0 & \text{otherwise}.
  \end{cases}
\end{equation*}
We proceed to the analysis of this construction.
We will first consider the randomized algorithm $R'$,
and then show that derandomization can only decrease the error.

\begin{proof}[Proof of Lemma~\ref{lem:nonUniformSampling}]
We first give bounds on the moments of the distribution $\W$.  Indeed,
recall that by definition $w \eqdef \max_{j\in[k]}
\tfrac{1}{u_{j}}$. We define the event $M$ to be that $w\le n^{5}$;
note that $\Pr[M]\ge 1-k\cdot n^{-5}\ge 1-O(n^{-2})$.  Conditioned on
$M$, each $u_j\in U(n^{-5},1)$, and we have $ \E{\tfrac{1}{u_j}} =
\tfrac{1}{1-n^{-5}} \int_{n^{-5}}^1 \tfrac{1}{x}\dx =
\tfrac{\ln(n^5)}{1-n^{-5}}$. Thus
\[
  \EE{w\in\W}{w\mid M} 
  \leq \E{\textstyle \sum_{j\in[k]} \tfrac{1}{u_j} \mid M}
  \leq O(k\log n).
\]
Now fix $\alpha\in(0,1)$. It is immediate that
$\E{1/u^\alpha}=O(1/(1-\alpha))$. We can similarly prove that $
\EE{w\in \W}{w^\alpha} \leq O(k^\alpha/(1-\alpha))$, but the calculation
is technical, and we include its proof in
Appendix~\ref{apx:expAlpha}.

We now need to prove that $\hat\sigma$ is an approximator to $\sigma$,
with probability at least $2/3$. The plan is to first compute the
expectation of $s_{i,j}$, for each $i\in [n], j\in[k]$. This
expectation depends on the approximator values $\hat a_i$, which
itself may depend (adversarially) on $w_i$, so instead we give upper
and lower bounds on the expectation $\E{s_{i,j}}\approx \tfrac{a_i}{tk}$.
Then, we wish to apply a concentration bound on the sum of $s_{i,j}$,
but again the $s_{i,j}$ might depend on the random values $w_i$, so we
actually apply the concentration bound on the upper/lower bounds of
$s_{i,j}$, and thereby derive bounds on $s=\sum s_{i,j}$.

Formally, we define random variables $\overline{s}_{i,j},
\underline{s}_{i,j}\in\{0,1/k\}$. We set $\overline{s}_{i,j}=1/k$ iff 
$u_{i,j}\le fa_i/(t-1)$, and $0$ otherwise. Similarly, we set
$\underline{s}_{i,j}=1/k$ iff $u_{i,j}\le a_i/f(t+1)$, and $0$ otherwise.
We now claim that 
\begin{equation}
\label{eqn:sUBLB}
\underline{s}_{i,j}\le s_{i,j}\le \overline{s}_{i,j}.
\end{equation}
Indeed, if $s_{i,j}=1/k$ then $u_{i,j} \le \hat a_{i}/t$, and hence,
using the fact that $\hat a_i$ is a $(u_{i,j}\,,f)$-approximator to $a_i$, 
we have $u_{i,j}\le fa_i/(t-1)$, or
$\overline{s}_{i,j}=1/k$. 
Similarly, if $s_{i,j}=0$, then $u_{i,j} > \hat a_{i}/t$, 
and hence $u_{i,j}>a_i/f(t+1)$, or
$\underline{s}_{i,j}=0$. Notice for later use that each of 
$\{\overline{s}_{i,j}\}$ and $\{\underline{s}_{i,j}\}$ 
is a collection of $nk$ pairwise independent random variables.
For ease of notation, define
$\underline{\hat \sigma}={t}\sum_{i,j} \underline{s}_{i,j}$
and 
$\overline{\hat\sigma}={t}\sum_{i,j} \overline{s}_{i,j}$,
and observe that 
$\underline{\hat \sigma}\le \hat\sigma\le \overline{\hat\sigma}$.

We now bound $\E{\overline{s}_{i,j}}$ and
$\E{\underline{s}_{i,j}}$. For this, it suffices to compute the
probability that $\overline{s}_{i,j}$ and $\underline{s}_{i,j}$ are
$1/k$. For the first quantity, we have:
\begin{equation}
\label{eqn:sij-upperBound}
  \Pr\Big[\overline{s}_{i,j}=\tfrac1k\Big]
  = \Pr\Big[ u_{i,j}\le \tfrac{fa_i}{t-1} \Big]
  = \tfrac{fa_i}{t-1} 
  \leq e^{\eps/2}f \cdot \tfrac{a_i}{t},
\end{equation}
where we used the fact that $t-1\ge e^{-\eps/2}t$. 
Similarly, for the second quantity, we have:
\begin{equation}
\label{eqn:sij-lowerBound}
  \Pr\Big[\underline{s}_{i,j}=\tfrac1k \Big]
  = \Pr\Big[ u_{i,j}\le \tfrac{a_i}{f(t+1)} \Big]
  = \tfrac{a_i}{f(t+1)} 
  \geq e^{-\eps/2}f^{-1} \cdot \tfrac{a_i}{t}.
\end{equation}

Finally, using Eqn.~\eqref{eqn:sUBLB} and the fact that
$\E{s}=\sum_{i,j} \E{s_{i,j}}$, we can bound the expectation and
variance of $\hat \sigma=st$ as follows:
\begin{equation}
e^{-\eps/2}f^{-1}\cdot \sigma
\le
{t}\sum_{i,j} \E{\underline{s}_{i,j}}
\le
\E{\hat \sigma}
\le
{t}\sum_{i,j} \E{\overline{s}_{i,j}}
\le
 e^{\eps/2}f\cdot \sigma,
\end{equation}
and, using pairwise independence,
$
\Var{\underline{\hat \sigma}}
,
\Var{\overline{\hat \sigma}}
\le
t^2 \cdot \sum_{i,j}k^{-2}\cdot e^{\eps/2} \cdot \tfrac{fa_i}{t}
\le 4tk^{-1}\sigma.
$
Recall that we want to bound the probability that 
$\underline{\hat \sigma}$ and $\overline{\hat \sigma}$ deviate (additively)
from their expectation by roughly $\eps\sigma+\rho$,
which is larger than their standard deviation 
$O(\sqrt{tk^{-1}\sigma}) = O(\sqrt{\rho\eps\sigma})$.

Formally, to bound the quantity $\hat\sigma$ itself, we distinguish two
cases. First, consider $\sigma>\rho/\eps$. 
Then for our parameters $k={\zeta}/{\rho\eps^2}$ and $t=4/\eps$,
$$
\Pr\Big[\overline{\hat \sigma}>e^{\eps/2}f\sigma\cdot (1+\eps/2)\Big]
\le
\Pr\Big[\overline{\hat \sigma}-\E{\overline{\hat\sigma}} 
     > \eps/2\cdot e^{\eps}f\sigma\Big]
\le
\tfrac{\Var{\overline{\hat \sigma}}}{(\eps/2\cdot e^{\eps}f\sigma)^2}
\le 
\tfrac{4tk^{-1}\sigma}{\eps^2\sigma^2/4}
\le \tfrac{O(\rho/\eps\zeta)}{\sigma}
\le 0.1
$$
for sufficiently large $\zeta$. Similarly, $\Pr[\underline{\hat
  \sigma}<f^{-1}e^{-\eps/2}\sigma\cdot e^{-\eps/2}]\le 0.1$.

Now consider the second case, when $\sigma\le \rho/\eps$. Then we have
$$
\Pr\Big[\overline{\hat \sigma}>fe^{\eps/2}\sigma+\rho\Big]
\le
\Pr\Big[\overline{\hat \sigma}-\E{\overline{\hat\sigma}} > \rho \Big]
\le
\tfrac{\Var{\overline{\hat \sigma}}}{\rho^2}
\le 
\tfrac{4tk^{-1}\cdot\rho/\eps}{\rho^2}
\le 0.1.
$$ Similarly, we have $\Pr[\underline{\hat
    \sigma}<f^{-1}e^{-\eps/2}\sigma-\rho]\le 0.1$.  This completes the
proof that $\hat\sigma$ is a $(\rho,fe^\eps)$-approximator to
$\sigma$, with probability at least $2/3$.

Finally, we argue that switching to the deterministic algorithm $R$
only decreases the variances without affecting the expectations, 
and hence the same concentration bounds hold.
Formally, denote our replacement for $s_i$ by
$s'_i=\EE{u_{i,j}}{\sum_{j\in[k]} s_{i,j}\mid \max_{j\in [k]} 1/u_{i,j}=w_i}$,
and note it is a random variable (because of $w_i$).
Define 
$\overline{s}'_i 
  = \E{\sum_{j\in[k]} \overline{s}_{i,j}\mid \max_{j\in [k]} 1/u_{i,j}=w_i}$,
and by applying conditional expectation to Eqn.~\eqref{eqn:sUBLB}, 
we have $s_i \le \overline{s}'_i$.
We now wish to bound the variance of $\sum_i \overline{s}'_i$.
By the law of total variance, and using the shorthand $\vec{w}=\{w_i\}_i$, 
\begin{equation}
 \label{eqn:varS}
\textstyle
  \Var{\sum_i \overline{s}_i }
  = \E{\Var{\sum_i \overline{s}_i \mid \vec{w}}}
  + \Var{\E{\sum_i \overline{s}_i \mid \vec{w}}}.
\end{equation}
We now do a similar calculation for ${\sum_i \overline{s}'_i }$,
but since each $\overline{s}'_i$ is completely determined from the 
known $\vec{w}$, 
the first summand is just $0$
and in the second summand we can change each $\overline{s}'_i$ to $\overline{s}_i$,
formally
\begin{equation}
\label{eqn:varSprime}
 \textstyle
  \Var{\sum_i \overline{s}'_i }
  = \E{\Var{\sum_i \overline{s}'_i \mid \vec{w}}}
  + \Var{\E{\sum_i \overline{s}'_i \mid \vec{w}}}
  = \Var{\E{\sum_i \overline{s}_i \mid \vec{w}}}.
\end{equation}
Eqns.~\eqref{eqn:varS} and~\eqref{eqn:varSprime} imply that in the
deterministic algorithm the variance (of the upper bound) can indeed
only decrease.  The analysis for the lower bound is analogous, using
$\underline{s}'_i$.  As before, using the fact that the
$\overline{s}_i'$ are pairwise independent (because the $w_i$ are) we
apply Chebyshev's inequality to bound deviation for the algorithm
$R's$ actual estimate $\hat\sigma={t}\sum_i s'_i$.
\end{proof}

\section{Applications I: Warm-Up}
\label{sec:basicStreamingApps}

We now describe our streaming algorithms that use the Precision
Sampling Lemma (PSL) as the core primitive. We first outline two generic
procedures that are used by several of our
applications. The current description leaves some parameters
unspecified: they will be fixed by the particular applications. 
These two procedures are also given in pseudo-code as
Alg.~\ref{alg:ellp-sketch} and Alg.~\ref{alg:ellp-reconstruction}.

As previously mentioned, our sketch function is a linear function
$L:\R^n\to \R^S$ mapping an input vector $x\in\R^n$ into $\R^S$, where
$S$ is the space (in words). The algorithm is simply a fusion of PSL
with a heavy hitters algorithm~\cite{CCF,CM05b}. We use a parameter
$p\ge 1$, which one should think of as the $p$ in the $\ell_p$-norm
estimation problem, and $p=k$ in the $F_k$ moment estimation.  Other
parameters are: $\rho\in (0,1)$ (additive error), $\eps\in(0,1/3)$
(multiplicative error), and $m\in\N$ (a factor in the space usage).

The sketching algorithm is as follows. We start by initializing a vector
of $w_i$'s using Lemma~\ref{lem:nonUniformSampling}: specifically we
draw $w_i$'s from $\W=\W(k)$ for $k=\tfrac{\zeta}{\rho\eps^2}$.
We use $l=O(\log n)$ hash tables $\{H_j\}_{j\in[l]}$, each of size $m$. 
For each hash table $H_j$, 
choose a random hash function $h_j:[n]\to[m]$,
and Rademacher random variables $g_j:[n]\to \{-1,+1\}$.
Then the sketch $Lx$ is obtained by repeating the following for 
every hash table $j\in[l]$ and index $i\in[n]$: 
hash index $i\in[n]$ to find its cell $h_j(i)$, 
and add to this cell's contents the quantity $g_j(i) \cdot x_iw_i^{1/p}$.
Overall, $S=lm$.

The estimation algorithm $E$ proceeds as follows.
First normalize the sketch $Lx$ by scaling it down 
by an input parameter $r\in \R_+$.
Now for each $i\in[n]$, compute the median, over the $l$ hash tables, 
of the $p$th power of cells where $i$ falls into. 
Namely, let $\hat x_i$ be the median of
$|H_j(h_j(i))|^p/r w_i$ over all $j\in[l]$. 
Then run the PSL reconstruction algorithm $R$ on the vectors 
$\{\hat x_i\}_{i\in[n]}$ and $\{w_i\}_{i\in[n]}$, 
to obtain an estimate $\hat\sigma=\hat\sigma(r)$. The final output is $r\cdot
\hat\sigma(r)$.

We note that it will always suffice to use pairwise independence for
each set of random variables $\{w_i\}_i$, $\{g_j(i)\}_i$, and
$\{h_j(i)\}_i$ for each $j\in [l]$.
For instance, it suffices to draw each hash function $h_j$ 
from a universal hash family.

\rnote{do we ever discuss it? we discuss there precision for $\W$, not $Lx$}
\aanote{i guess no, we probably should...}

Finally, we remark that, while the reconstruction
Alg.~\ref{alg:ellp-reconstruction} takes time $\Omega(n)$, one can
reduce this to time $m\cdot (\eps^{-1}\log n)^{O(1)}$ by using a more
refined heavy hitter sketch. We discuss this issue later in this
section.

\begin{algorithm}[!h]
\caption{Sketching algorithm for norm estimation. Input is a vector
  $x\in \R^n$. Parameters $p$, $\eps$, $\rho$, and $m$ are specified later.}
\label{alg:ellp-sketch}
Generate $\{w_i\}_{i\in[n]}$ as prescribed by PSL,
using $\W=\W(k)$ for $k=\zeta\rho^{-1}\eps^{-2}$.
\\
Initialize $l=O(\log n)$ hash tables $H_1,\ldots,H_l$, each of size $m$.
For each table $H_j$, choose a random hash function $h_j:[n]\to [m]$
and a random $g_j:[n]\to \{-1,+1\}$.
\\
\For{each $j\in[l]$}{
Multiply $x$ coordinate-wise with the vectors $\{w_i^{1/p}\}_{i\in[n]}$ and $g_j$, 
and hash the resulting vector into the hash table $H_j$. Formally,
$H_j(z)\triangleq\sum_{i:h_j(i)=z} g_j(i)\cdot w_i^{1/p}\cdot x_i$.
}
\end{algorithm}

\begin{algorithm}[!h]
\caption{Reconstruction algorithm for norm estimation. Input consists
  of $l$ hash tables $H_j$, precisions $w_i$ for $i\in [n]$, and a real
  $r>0$. Other parameters, $p,\eps,\rho,m$, are as in Alg.~\ref{alg:ellp-sketch}.}
\label{alg:ellp-reconstruction}
For each $i\in[n]$, compute 
$\hat x_i=\median_{j\in[l]}
   \big\{\tfrac{\left|H_j(h_j(i)) / r \right|^p}{w_i}\big\}$.
\\
Apply PSL reconstruction algorithm $R$ to vector $(\hat x_1,\ldots
\hat x_n)$ and $(w_1,\ldots w_n)$, and let $\hat \sigma$ be its
output. Explicitly, for each $i\in [n]$, if $\hat x_iw_i\ge
t\triangleq 4/\eps$, 
then set $s_i\triangleq\tfrac{1}{k}+\tfrac{k-1}{k}\cdot \tfrac{\hat
  x_iw_i/t-1}{w_i-1}$
(recall $k= \zeta\rho^{-1}\eps^{-2}$ from PSL), otherwise 
$s_i\triangleq 0$; then, let $\hat\sigma=t\sum_i s_i$.
\\
Output $r\cdot \hat \sigma$.
\end{algorithm}

\subsection{Estimating $F_k$ Moments for $k>2$}
\label{sec:FkMoments}

We now present the algorithm for estimating $F_k$ moments for
$k>2$, using the PSL Lemma~\ref{lem:nonUniformSampling}. To reduce the
clash of parameters, we refer to the problem as
``$F_p$ moment estimation''.

\newcommand{\knorm}{p}

\begin{theorem}
\label{thm:momentEstimation}
Fix $n\ge 8$, $p>2$, and $0<\eps<1/3$. There is a randomized linear function
$L:\R^n\to
\R^{S}$, with $S=O(n^{1-2/p}\cdot p^2\eps^{-2-4/\knorm}\log n)$, and a
deterministic estimation algorithm $E:\R^S\to \R$, 
such that for every $x\in \R^n$, 
with probability at least $0.51$, its output
$E(L(x))$ approximates $\|x\|_\knorm^\knorm$ within factor $1+\eps$.
\end{theorem}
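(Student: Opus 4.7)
The plan is to analyze Algorithms~\ref{alg:ellp-sketch} and~\ref{alg:ellp-reconstruction} instantiated with moment index~$p$, PSL precisions $(\rho,\eps)$ with $\rho=\Theta(\eps)$ (hence $k=O(\eps^{-3})$, and $\EE{w\in\W}{w^{2/p}}\le O(k^{2/p})$ by the cost clause of Lemma~\ref{lem:nonUniformSampling}), $l=O(\log n)$ hash tables, and width $m$ to be fixed below. Running in parallel, a standard constant-factor $F_p$ estimator (using $O(n^{1-2/p}\poly\log n)$ extra words) supplies $r\in\R_+$ with $\|x\|_p^p\le r^p\le C\|x\|_p^p$ for some absolute constant $C$. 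The final reported estimate of $\|x\|_p^p$ is obtained from $r^p\cdot\hat\sigma(r)$ (i.e.\ the output of Algorithm~\ref{alg:ellp-reconstruction}, rescaled by $r^{p-1}$ so that an $\ell_p$-norm estimator becomes an $F_p$-moment estimator).

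\paragraph{Reduction to PSL.}
Define $a_i\eqdef|x_i|^p/r^p\in[0,1]$ and $\sigma\eqdef\sum_i a_i=\|x\|_p^p/r^p=\Theta(1)$. Lemma~\ref{lem:nonUniformSampling} guarantees that whenever the reconstruction $R$ is handed $(1/w_i,1+O(\eps))$-approximators $\hat x_i$ of each $a_i$, its output $\hat\sigma$ is a $(\rho,1+O(\eps))$-approximator of $\sigma$. Since $\sigma=\Theta(1)$ and $\rho=\Theta(\eps)$, this collapses to $\hat\sigma\in(1\pm O(\eps))\sigma$, yielding a $(1\pm O(\eps))$ estimate of $\|x\|_p^p$. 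The proof therefore reduces to showing that, with probability at least $0.6$ over the random bits, every $\hat x_i$ computed in line~1 of Algorithm~\ref{alg:ellp-reconstruction} is a valid PSL-approximator of $a_i$; a union bound with the (constant number of) other bad events will then drop the success probability to~$\ge 0.51$.

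\paragraph{Analyzing the CountSketch layer.}
Write $y_i\eqdef w_i^{1/p}x_i$ for the scaled vector actually fed to CountSketch. The PSL fractional-moment bound with $\alpha=2/p\in(0,1)$ (valid since $p>2$) and H\"older's inequality $\|x\|_2^2\le n^{1-2/p}\|x\|_p^2$ give
\[
 \E{\|y\|_2^2}=\sum_i x_i^2\,\EE{w\in\W}{w^{2/p}}\le O(k^{2/p})\,n^{1-2/p}\,\|x\|_p^2\le O(k^{2/p})\,n^{1-2/p}\,r^2,
\]
and a Markov step fixes $\|y\|_2^2=O(k^{2/p}n^{1-2/p}r^2)$ with probability $\ge 0.9$. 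Standard CountSketch analysis, boosted by the median over the $l=O(\log n)$ tables, produces an estimator $\tilde y_i=g_j(i)H_j(h_j(i))$ (for the median-achieving $j$) satisfying $|\tilde y_i-y_i|=O(\|y\|_2/\sqrt m)$ simultaneously for all $i\in[n]$ with probability $1-1/\poly(n)$. Setting $m=\Theta((p/\eps)^2\cdot k^{2/p}n^{1-2/p})$ therefore guarantees $|\tilde y_i-y_i|\le(\eps/p)\,r$ for every $i$.

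\paragraph{From $\ell_2$ error to PSL-approximator; main obstacle.}
For each $i$, $|\tilde y_i|\le|y_i|+(\eps/p)r\le(1+\eps/p)\max(|y_i|,r)$, so raising to the $p$-th power and using $(1+\eps/p)^p\le e^\eps\le 1+O(\eps)$ gives $|\tilde y_i|^p\le(1+O(\eps))|y_i|^p+O(r^p)$, with a symmetric lower bound. Dividing by $r^p w_i$ rewrites this as $\hat x_i\in(1\pm O(\eps))\,a_i\pm O(1/w_i)$, i.e.\ precisely the $(1/w_i,1+O(\eps))$-approximator consumed by PSL (the $O(\cdot)$ in the additive term is absorbed by a constant-factor adjustment of $\rho$). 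Lemma~\ref{lem:nonUniformSampling} then closes the argument. The main technical obstacle is the interplay between the two probabilistic layers: CountSketch delivers only additive $\ell_2$-type noise on $\tilde y_i$, and once raised to the $p$-th power that noise blows up by a factor $(1+\eps/p)^p$, which must be held at most $1+O(\eps)$; this forces per-coordinate tolerance $\eps/p$ and, in turn, the $p^2$ factor in~$m$. Combined with the $k^{2/p}$ fractional moment of~$\W$ and the $n^{1-2/p}$ from H\"older, balancing~$\rho$ carefully against the approximation target produces the advertised total space $O(n^{1-2/p}p^2\eps^{-2-4/p}\log n)$. A minor caveat is that $r$ is only known up to a constant factor, but this can be absorbed into the constants hidden in~$m$.
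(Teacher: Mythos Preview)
Your overall architecture is correct and is essentially the paper's: scale by $w_i^{1/p}$, run CountSketch, show the per-coordinate estimates are $(1/w_i,1+O(\eps))$-approximators to $|x_i|^p$ (suitably normalized), and invoke PSL. Two points deserve comment, one structural and one technical.

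\textbf{Normalization.} You normalize by $r\approx\|x\|_p$ (obtained from an auxiliary constant-factor $F_p$ estimator) and take $\rho=\Theta(\eps)$, so that $\sigma=\Theta(1)$. The paper instead normalizes by $r\approx\|x\|_2$ using an AMS sketch and takes $\rho=\Theta(\eps/n^{p/2-1})$; then $\sigma=\|x\|_p^p\in[\Theta(n^{1-p/2}),1]$ and H\"older guarantees $\sigma\ge\rho/\eps$, so the additive $\rho$ still becomes multiplicative. Both routes are valid; the paper's avoids the mild circularity of invoking an $F_p$ estimator to build an $F_p$ estimator, and only needs the elementary AMS sketch as a subroutine.

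\textbf{The $\eps^{2/p}$ loss.} Your case split $|\tilde y_i|\le(1+\eps/p)\max(|y_i|,r)$, hence $|\tilde y_i|^p\le(1+O(\eps))|y_i|^p+O(r^p)$, forces per-coordinate noise $|\delta|\le(\eps/p)r$ and therefore $m=\Theta\big((p/\eps)^2\cdot k^{2/p}n^{1-2/p}\big)$. With $\rho=\Theta(\eps)$ this is $k=\Theta(\eps^{-3})$ and $m=\Theta(p^2\eps^{-2-6/p}n^{1-2/p})$, which is off from the stated $\eps^{-2-4/p}$ by a factor $\eps^{-2/p}$. Your remark about ``balancing $\rho$'' cannot fix this: once you normalize by $\|x\|_p$, $\sigma=\Theta(1)$ pins $\rho=\Theta(\eps)$. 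The paper recovers the missing factor by a sharper split, comparing $|y_i|$ not to $r$ but to $(2p/\eps)|\delta|$. In the ``small $y_i$'' branch one then bounds
\[
\big||\tilde y_i|^p-|y_i|^p\big|\ \le\ \big((2p/\eps)|\delta|+|\delta|\big)^p-\big((2p/\eps)|\delta|\big)^p\ \le\ O\big((p/\eps)^{p-1}|\delta|^p\big),
\]
so it suffices to have $|\delta|\le O(\eps^{1-1/p}/p)\cdot r$ rather than $|\delta|\le(\eps/p)r$. That relaxation of the tolerance by $\eps^{-1/p}$ is exactly what shrinks $m$ by $\eps^{2/p}$ and lands on $O(p^2\eps^{-2-4/p}n^{1-2/p})$. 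With this one refinement your argument goes through as stated.
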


\begin{proof}[Proof of Theorem~\ref{thm:momentEstimation}]
Our linear sketch $L$ is Alg.~\ref{alg:ellp-sketch}, and the
estimation algorithm $E$ is Alg.~\ref{alg:ellp-reconstruction}, with
the following choice of parameters.  Let
$\rho=\tfrac{\eps/4}{n^{p/2-1}}$. Let $\W=\W(k)$, for
$k=\zeta\rho^{-1}\eps^{-2}$, be from PSL
Lemma~\ref{lem:nonUniformSampling}.  Define
$\omega=9\EE{w\in\W}{w^{2/\knorm}}$, and note that $\omega\le
O(\rho^{-2/\knorm}\eps^{-4/\knorm})$ by
Lemma~\ref{lem:nonUniformSampling}.  Finally we set $m=\alpha\cdot
O(\rho^{-2/\knorm}\eps^{-4/\knorm})$ so that $m\ge \alpha\omega$,
where $\alpha=\alpha(\knorm,\eps)>1$ will be determined later.

In Alg.~\ref{alg:ellp-reconstruction}, we set $r$ to be a factor
$1-1/\knorm$ approximation to $\|x\|_2$, i.e., $(1-1/\knorm)\|x\|_2\le
r\le \|x\|_2$. Note that such $r$ is easy to compute (with high
probability) using, say, the AMS linear sketch \cite{AMS}, with
$O(p^2\log n)$ additional space. Thus, for the rest, we will just
assume that $\|x\|_2\in [1-1/\knorm,1]$ and set $r=1$.

The plan is to apply PSL Lemma~\ref{lem:nonUniformSampling} where
each unknown value $a_i$ is given by $|x_i|^p$, 
and each estimate $\hat a_i$ is given by $\hat x_i$. For this purpose, 
we need to prove that the $\hat x_i$'s are good approximators.
We thus let $F_2=\sum_{i=1}^n (x_iw_i^{1/\knorm})^2$. Note that
$\E{F_2}=\|x\|_2^2\cdot\EE{w\in\W}{w^{2/\knorm}}\le \omega/9$, and hence
by Markov's inequality, with probability at least $8/9$ 
we have $F_2\le \omega$.
\begin{claim}
Assume that $F_2\le \omega$. 
Then with high probability (say $\geq 1-1/n^2$) over the choice of the
hash tables, for every $i\in[n]$ 
the value $\hat x_i$ is a $(1/w_i,e^\eps)$-approximator to $|x_i|^\knorm$.
\end{claim}
\begin{proof}
We shall prove that for each $i\in[n]$ and $j\in[l]$,
with probability $\ge 8/9$ over the choice of $h_j$ and $g_j$,
the value $\tfrac{|H_j(h_j(i))|^\knorm}{w_i}$ is a 
$\left(1/w_i,e^\eps\right)$-approximator to $|x_i|^\knorm$.
Recall that each $\hat x_i$ is the median of $|H_j(h_j(i))|^\knorm/w_i$ over
$l=O(\log n)$ values of $j$, we get by applying a Chernoff bound
that with high probability it is a $(1/w_i,e^\eps)$-approximator to $|x_i|^p$.
The claim then follows by a union bound over all $i\in[n]$.

Fix $i\in[n]$ and $j\in[l]$, let $Y\triangleq H_j(h_j(i))$.
For $f\in[n]$, define $y_f=g_j(f)\cdot x_fw_f^{1/\knorm}$ if
$h_j(f)=h_j(i)$ and $0$ otherwise. Then
$Y=y_i+\delta$ where $\delta\triangleq \sum_{f\neq i} y_f$. Ideally,
we would like that $|Y|^p\approx 
|y_i|^p=|x_i|^pw_i$, i.e., the effect of the error $\delta$ is
small. Indeed, $\E{\delta^2}=\E{(\sum_{f\neq i}  
  y_f)^2}=\tfrac{1}{m}\sum_{f\neq i}(x_fw_f^{1/p})^2\le F_2/m$. 
Hence, by Markov's inequality,
$|\delta|\le \sqrt{9F_2/m}\le 3/\sqrt{\alpha}$
with probability at least $8/9$.

We now argue that if this event $|\delta|\le 3/\sqrt{\alpha}$ occurs, then 
$\tfrac{|H_j(h_j(i))|^\knorm}{w_i} 
 = \tfrac{|Y|^{\knorm}}{w_i}
 = \big|g_j(i)x_i+{\delta}/{w_i^{1/\knorm}}\big|^{\knorm}$ 
is a good approximator to $|x_i|^\knorm$.
Indeed, if $|\delta|/w_i^{1/\knorm} \le \tfrac{\eps}{2\knorm} |x_i|$, 
then clearly $\tfrac{|Y|^{\knorm}}{w_i}=(1\pm \tfrac{\eps}{2\knorm})^{\knorm}|x_i|^\knorm$. 
Otherwise,
since $|\delta|\le 3/\sqrt{\alpha}$, 
we have that
\begin{eqnarray*}
\left||Y|^p-|x_iw_i^{1/p}|^p\right|
&\le&
(|x_iw_i^{1/p}|+|\delta|)^p-|x_iw_i^{1/p}|^p
\\
&\le&
(\tfrac{2p}{\eps}|\delta|+|\delta|)^p-(\tfrac{2p}{\eps}|\delta|)^p
\\
&\le&
|\delta|^p\cdot (2p/\eps)^p\cdot \left((1+\tfrac{\eps}{2p})^p-1\right)
\\
&\le&
(6p)^p\cdot \eps^{1-p}/\alpha^{p/2}.
\end{eqnarray*}

If we set $\alpha=(6\knorm)^2/\eps^{2-2/p}$, then we obtain that
$\tfrac{|Y|^\knorm}{w_i}$ is a $(1/w_i,e^\eps)$-approximator to
$|x_i|^\knorm$, with probability at least $8/9$. We now take median
over $O(\log n)$ hash tables and apply a union bound to reach the
desired conclusion.
\end{proof}

We can now complete the proof of Theorem~\ref{thm:momentEstimation}.
Apply PSL (Lemma~\ref{lem:nonUniformSampling}) with
$a_i=|x_i|^p$ and $\hat a_i=\hat x_i$'s. 
By \Holder's inequality for $p/2$ and the normalization $r=1$, 
we have 
$\|x\|_\knorm^\knorm\ge \|x\|_2^\knorm/n^{\knorm/2-1}\ge \rho/\eps$, 
and thus additive error $\rho$
transforms to multiplicative error $1+\eps$.
\rnote{The PSL multiplicative error is $e^{2\eps}$,
and thus altogether, with probability at least $2/3-1/9-1/n^2\ge .51$,
the algorithm estimates $\|x\|_\knorm^\knorm$ within factor $e^{3\eps}$,
which matches theorem's assertion up to scaling $\eps$.}
\aanote{i agree. shall we add in text?}
It remains to bound the space: $S\le O(m\log n)
=O(\alpha \rho^{-2/\knorm}\eps^{-4/\knorm} \log n)
=O(\knorm^2/\eps^{2-2/p}\cdot \eps^{-6/\knorm}n^{1-2/\knorm}\cdot \log
n)=O(\knorm^2n^{1-2/\knorm}\cdot \eps^{-2-4/\knorm}\cdot \log n)$.
\end{proof}

\subsection{Estimating $\ell_1$ Norm}
\label{sec:ell1-norm}

To further illustrate the use of the Alg.~\ref{alg:ellp-sketch}
and~\ref{alg:ellp-reconstruction}, we now show how to use them for
estimating the $\ell_1$ norm.  In a later section, we obtain similar
results for all $\ell_p$, $p\in[1,2]$, except that the analysis is
more involved.

We obtain the following theorem. For clarity of presentation, the
efficiency (space and runtime bounds) are discussed separately below.

\begin{theorem}
\label{thm:ell1}
Fix $n\ge8$ and $8/n<\eps<1/8$. There is a randomized linear
function $L:\R^n\to \R^{S}$, with $S=O(\eps^{-3}\log^2 n)$, and a deterministic
estimation algorithm $E:\R^S\to \R$, such that for every $x\in \R^n$, 
with probability at least $0.51$, its output 
$E(L(x))$ approximates $\|x\|_1$ within factor $1+\eps$.
\end{theorem}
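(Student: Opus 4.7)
The plan is to instantiate Alg.~\ref{alg:ellp-sketch} and Alg.~\ref{alg:ellp-reconstruction} with $p=1$ and additive error $\rho=\eps/4$, which by Lemma~\ref{lem:nonUniformSampling} gives $k=\zeta/(\rho\eps^2)=O(\eps^{-3})$. I would take $m=\Theta(k\log n)=O(\eps^{-3}\log n)$ buckets per table and $l=O(\log n)$ tables, so that the total space is $S=lm=O(\eps^{-3}\log^2 n)$. In parallel I would keep a small side sketch (e.g.\ Indyk's $1$-stable sketch~\cite{I00b}), using only $O(\log^2 n)$ bits, that produces w.h.p.\ some $r\in[\|x\|_1/2,\,\|x\|_1]$ to be fed into Alg.~\ref{alg:ellp-reconstruction}.

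The core step is to argue that with probability at least $1-O(1/n^2)$ over the sketch's randomness, every $\hat x_i$ returned by Alg.~\ref{alg:ellp-reconstruction} is a $(1/w_i,1)$-approximator to $a_i\eqdef|x_i|/r$, so that PSL can be applied with $f=1$. Setting $y_i\eqdef x_i w_i$ and writing $H_j(h_j(i))=g_j(i)y_i+\delta_j^i$, one has the deterministic bound $|\delta_j^i|\le\sum_{f\neq i:\,h_j(f)=h_j(i)}|y_f|$. First, conditioning on the PSL event $M=\{\max_i w_i\le n^5\}$ (which holds w.h.p.), Lemma~\ref{lem:nonUniformSampling} gives $\EE{w\in\W}{w\mid M}\le O(k\log n)$, hence $\E{\|y\|_1\mid M}\le O(k\log n)\,\|x\|_1$, and Markov yields $\|y\|_1\le O(k\log n)\,\|x\|_1$ with probability at least $8/9$. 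Second, pairwise independence of each $h_j$ together with a Markov bound on $\sum_{f\neq i:\,h_j(f)=h_j(i)}|y_f|$ gives $|\delta_j^i|\le O(\|y\|_1/m)\le r$ with constant probability, provided $m\ge Ck\log n$ for a sufficiently large constant $C$. Whenever this holds, $|H_j(h_j(i))|/(rw_i)\in[\max\{0,\,|x_i|/r-1/w_i\},\,|x_i|/r+1/w_i]$, which is exactly the $(1/w_i,1)$-approximator property. Taking the median over $l=O(\log n)$ tables boosts per-coordinate success to $1-1/n^3$, and a union bound over $i\in[n]$ concludes the claim.

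With the approximators in hand, Lemma~\ref{lem:nonUniformSampling} guarantees that $\hat\sigma$ is a $(\rho,e^\eps)$-approximator to $\sigma\eqdef\sum_i a_i=\|x\|_1/r$. Since $r\le\|x\|_1$, one has $\sigma\ge 1>\rho/\eps$, so the additive $\rho$ is absorbed into a multiplicative $(1+O(\eps))$ factor, and the reported value $r\hat\sigma$ is a $(1+O(\eps))$-approximation to $\|x\|_1$; rescaling $\eps$ by an absolute constant yields the $(1+\eps)$-approximation of the theorem. Accounting for all failure events (PSL $\ge 2/3$, the $\|y\|_1$ bound $\ge 8/9$, the approximator union bound $\ge 1-O(1/n^2)$, and the side sketch) gives total success probability at least $2/3-1/9-O(1/n^2)\ge 0.51$ for $n\ge 8$.

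The main obstacle is that the unconditional mean $\EE{w\in\W}{w}$ is \emph{not} finite (the PSL cost bound only gives $\EE{w\in\W}{w^\alpha}=O(k^\alpha)$ for $\alpha<1$); all $\ell_1$-norm bounds on $y=x\cdot w$ must therefore go through the truncation event $M$, and the resulting estimate $\E{\|y\|_1\mid M}\le O(k\log n)\,\|x\|_1$ is precisely what dictates $m=\Theta(k\log n)$. A secondary subtlety is that, although Alg.~\ref{alg:ellp-sketch} uses signed (CountSketch-style) buckets, the analysis deliberately does not rely on any cancellation from the $g_j$'s and instead uses only the crude $\ell_1$ contamination bound above; the signs $g_j$ are present only to keep the sketch a linear function of $x$, which is what allows general signed updates to be handled.
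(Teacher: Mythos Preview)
Your proof is correct and follows the same core argument as the paper: bound the $\ell_1$ contamination in each bucket via the conditional expectation $\EE{w\in\W}{w\mid M}=O(k\log n)$ and Markov, deduce that each $\hat x_i$ is a $(1/w_i,1)$-approximator to $|x_i|/r$, and invoke the PSL. The one substantive difference is how you obtain the normalizer $r$: the paper does \emph{not} use a side $\ell_1$-sketch but instead runs Alg.~\ref{alg:ellp-reconstruction} for geometrically decreasing guesses $r$ and exploits that the events $\mathcal E_r=\{\sum_i|x_iw_i|\le rm/3\}$ are nested, so no probability amplification is needed across guesses. Your bootstrap via Indyk's stable sketch is a clean alternative (and mirrors what the paper itself does for $F_k$ via an AMS side sketch), though it is mildly circular in that it presupposes an $\ell_1$ estimator.

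One side remark is incorrect: the random signs $g_j$ are \emph{not} what make the sketch linear---$H_j(z)=\sum_{i:h_j(i)=z} w_i x_i$ is already a linear function of $x$, and handling signed updates requires only linearity, not random signs. The $g_j$'s are in Alg.~\ref{alg:ellp-sketch} for uniformity with the $p>1$ analyses (where cancellation is genuinely used, cf.\ Claim~\ref{clm:approximator_ellp} and Lemma~\ref{lem:randomSum1D}); as you correctly note, the $\ell_1$ analysis here ignores them entirely.
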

\begin{proof}
The sketch function $L$ is given by Alg.~\ref{alg:ellp-sketch}, with
parameters $p=1$, $\rho=\eps/8$, and $m=C\eps^{-3}\log n$ for a
constant $C>0$ defined shortly.  Let $\W=\W(k)$ for
$k=\zeta\rho^{-1}\eps^{-2}$ be obtained from the PSL
Lemma~\ref{lem:nonUniformSampling}.  Define $\omega=10\EE{w\in\W}{w\mid
  M}$, where event $M=M(w)$ satisfies $\Pr[M]\ge 1-O(n^{-2})$. Note
that $\omega\le O(\eps^{-3}\log n)$. We set constant $C$ such that $m\ge
3\omega$.

The estimation procedure is just several invocations of Alg.
\ref{alg:ellp-reconstruction} for different values of $r$.
For the time being, assume we hold an overestimate of $\|x\|_1$, 
which we call $r\ge\|x\|_1$. Then algorithm $E$ works by applying 
Alg.~\ref{alg:ellp-reconstruction} with this parameter $r$.

Let $F_1=\sum_{i=1}^n |x_iw_i|/r$. Note that $\E{F_1\mid \cap_i
  M(w_i)}=\|x\|_1/r\cdot\EE{w\in\W}{w\mid M(w)}\le \omega/10$, and
hence by Markov's inequality, $F_1\le \omega \le m/3$ with probability
at least $9/10-O(n/n^2)\ge 8/9$. Call this event ${\cal E}_r$, and
assume henceforth it indeed occurs.

To apply the PSL, we need to prove that each $\hat x_i$ in
Alg.~\ref{alg:ellp-reconstruction} is a good approximator to $x_i$. 
Fix $i\in[n]$ and $j\in[l]$. We claim that, conditioned on
${\cal E}_r$, the with probability at least $2/3$, 
$\tfrac{|H_j(h_j(i))|}{rw_i}$ is a $(1/w_i,1)$-approximator of $|x_i|$. 
Indeed,
$\tfrac{H_j(h_j(i))}{rw_i}=\tfrac{1}{r} g_j(i)x_i+\tfrac{1}{rw_i}\sum_{f\neq i:h_j(f)=h_j(i)}
g_j(f)w_fx_f$, and thus, 
$$
\E{\left|
\tfrac{|H_j(h_j(x))|}{rw_i}-\tfrac{|x_i|}{r}
\right|}
\le 
\tfrac{1}{rw_i}\sum_{f\neq i} \tfrac{1}{m}|x_f w_f|\le \tfrac{F_1}{m w_i}
\le \tfrac{1}{3w_i}.
$$
Hence, by Markov's inequality,
$\tfrac{|H_j(h_j(x))|}{rw_i}$ is a $(1/w_i,1)$-approximator of $|x_i|/r$
with probability at least $2/3$. 
By a Chernoff bound, their median $\hat x_i=\median_{j\in[l]}
\big\{\tfrac{|H_j(h_j(i))|}{rw_i}\big\}$ is
a $(1/w_i,1)$-approximator to $|x_i|/r$ with probability at least
$1-n^{-2}$. Taking a union bound over all $i\in[n]$ and applying
the PSL (Lemma~\ref{lem:nonUniformSampling}), we obtain that the PSL output,
$\hat\sigma=\hat\sigma(r)$ is an $(\eps/8, e^\eps)$-approximator to
$\|x\|_1/r$, with probability at least $2/3-1/9-1/n^2\ge 0.6$.
\rnote{Can we indeed PSL (say it succeeds w.p. 2/3) even after 
conditioning on $\cal E_r$,
which assumed a certain sum of $w_i$'s is bounded?}
\aanote{I think so: think of it as a event space of $w$'s product with
  distribution of hash functions, and the event space is independent in the two
  ``coordinates'' (product measure). There is a certain 
  part of the space, 1/3, where PSL fail, by failure of the first
  coordinate. In the rest of the space, of
  measure 2/3, we have good stuff happening, with probability
  $\ge8/9$, over the choice of the second coordinate. In total, we
  have $1-1/3-2/3\cdot (1/9)$ measure of the good event. does this
  sound right?
}

Now, if we had $r\le 4\|x\|_1$, then we would be done as $r\hat\sigma$ would
be a $(\eps\|x\|_1/2,e^\eps)$-approximator to $\|x\|_1$, and hence a
$1+2\eps$ multiplicative approximator
(and this easily transforms to factor $1+\eps$ by suitable scaling of $\eps$). 
Without such a good estimate
$r$, we try all possible values $r$ that are powers of $2$, from high to low, until we make
the right guess. Notice that it is easy to verify that the current guess 
$r$ is sufficiently large that we can safely decrease it.
Specifically, if $r>4\|x\|_1$ then
$r\hat\sigma<e^\eps\|x\|_1+\eps r/8\le (r/4)\cdot [1+3\eps/2+\eps/2] =
(1+2\eps)r/4$. 
However, if $r\le 2\|x\|_1$ then
$r\hat\sigma\ge e^{-\eps}\|x\|_1-\eps r/8\ge
(r/2)\cdot [1-\eps-\eps/4] > (1+2\eps)r/4$.
We also remark that, while we repeat 
Alg.~\ref{alg:ellp-reconstruction} for $O(\log n)$ times
(starting from $r=n^{O(1)}$ suffices), there is no
need to increase the probability of success as 
the relevant events ${\mathcal E}_r=\{\sum_i |x_iw_i| \le rm/3\}$ 
are nested and contain the last one, where $r/\|x\|_1\in[1,4]$.
\end{proof}

\subsection{The Running Times}

We now briefly discuss the runtimes of our algorithms: the update time
of the sketching Alg.~\ref{alg:ellp-sketch}, and the
reconstruction time of the Alg.~\ref{alg:ellp-reconstruction}.

It is immediate to note that the update time of our sketching
algorithm is $O(\log n)$: one just has to update $O(\log n)$ hash
tables. We also note that we can compute a particular $w_i$ in $O(\log
n)$ time, which is certainly doable as $w_i$ may be generated directly
from the seed used for the pairwise-independent
distribution. Furthermore, we note that we can sample from the
distribution $\W=\W(k)$ in $O(1)$ time (see, e.g.,
\cite{I10-consSampling}).  \aanote{not the best reference!}

Now we turn to the reconstruction time of
Alg.~\ref{alg:ellp-reconstruction}. As currently described, this
runtime is $O(n\log n)$. One can improve the runtime by using the
CountMin heavy hitters (HH) sketch of~\cite{CM05b}, at the cost of a
$O(\log(\tfrac{\log n}{\eps}))$ factor increase in the space and
update time. This improvement is best illustrated in the case of
$\ell_1$ estimation. We construct the new sketch by just applying the
$\Theta(t/m)$-HH sketch (Theorem 5 of~\cite{CM05b}) to the vector
$x\cdot w$ (entry-wise product). The HH procedure returns at most
$O(m/t)$ coordinates $i$, together with
$(1/w_i,e^{\eps})$-approximators $\hat x_i$, for which it is possible
that $\hat x_i w_i\ge t$ (note that, if the HH procedure does not
return some index $i$, we can consider 0 as being its
approximator). This is enough to run the estimation procedure $E$ from
PSL, which uses only $i$'s for which $\hat x_iw_i\ge t$. Using the
bounds from \cite{CM05b}, we obtain the following guarantees. The
total space is $O(\eps^{-1}\log n\log(\tfrac{\log n}{\eps})\cdot
m/t)=O(m\log n\cdot \log(\tfrac{\log n}{\eps}))=O(\eps^{-3}\log^2
n\cdot \log(\tfrac{\log n}{\eps}))$. The update time is $O(\log n\cdot
\log(\tfrac{\log n}{\eps}))$ and reconstruction time is $O(\log^2 n\cdot
\log(\tfrac{\log n}{\eps}))$.

To obtain a similar improvement in reconstruction time for the
$F_k$-moment problem, one uses an analogous approach, except that one
has to use HH with respect to the $\ell_2$ norm, instead of the
$\ell_1$ norm (considered in \cite{CM05b}).

\section{Applications II: Bounds via $p$-Type Constant}

In this section, we show further applications of the PSL to streaming
algorithms.  As in Section~\ref{sec:basicStreamingApps}, our sketching
algorithm will be linear, following the lines of the generic
Alg.~\ref{alg:ellp-sketch}.

An important ingredient for our intended applications will be a
variation of the notion of {\em $p$-type} of a Banach space (or, more
specifically, the $p$-type constant). This notion will give a
bound on the space usage of our algorithms, and hence we will bound it
in various settings. Below we state the simplest such bound,
which is a form of the Khintchine inequality.

\begin{lemma}
\label{lem:randomSum1D}
Fix $p\in[1,2]$, $n\ge 1$ and $x\in \R^n$. 
Suppose that for each $i\in[n]$ we have two random variables,
$g_i\in\{-1,+1\}$ chosen uniformly at random,
and $\chi_i\in\zo$ chosen to be $1$ with probability $\alpha\in (0,1)$ 
(and $0$ otherwise). Then 
$$
\textstyle
\E{\Big|\sum_i g_i\chi_ix_i\Big|^p}\le \alpha\|x\|_p^p.
$$

Furthermore, suppose each family of random variables $\{g_i\}_i$ and
$\{\chi_i\}_i$ is only pairwise independent and the two families are independent of each
other. Then, with probability at least 7/9, we have that
$$
\left|\sum_i g_i\chi_ix_i\right|^p\le 3^{2+p}\alpha\|x\|_p^p.
$$
\end{lemma}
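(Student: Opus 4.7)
The plan is to condition on the Bernoulli selectors $\chi_i$, analyze the conditional Rademacher sum in the $g_i$'s by a second-moment bound, and then translate from the resulting $\ell_2$-style quantity $\sum_i\chi_ix_i^2$ to the desired $\ell_p$-style quantity $\sum_i\chi_i|x_i|^p$ via the elementary subadditivity inequality $\bigl(\sum_i a_i\bigr)^{p/2}\le\sum_i a_i^{p/2}$, valid for $p\le 2$ and $a_i\ge 0$. Applied with $a_i=\chi_ix_i^2$, and using $\chi_i^{p/2}=\chi_i$ since $\chi_i\in\zo$, this gives the pointwise inequality $\bigl(\sum_i\chi_ix_i^2\bigr)^{p/2}\le\sum_i\chi_i|x_i|^p$, whose expectation is exactly $\alpha\|x\|_p^p$ by linearity alone.

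For the first (fully independent) bound, I would condition on $\chi$ and use the orthogonality identity $\EE{g}{Y^2}=\sum_i\chi_ix_i^2$, where $Y\eqdef\sum_i g_i\chi_ix_i$; then Jensen's inequality applied to the concave map $t\mapsto t^{p/2}$ yields $\EE{g}{|Y|^p}\le\bigl(\sum_i\chi_ix_i^2\bigr)^{p/2}$. Invoking the subadditivity step and then taking expectation over $\chi$ gives $\E{|Y|^p}\le\alpha\|x\|_p^p$.

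For the second (only pairwise independent) bound, I would replace Jensen by Chebyshev. Conditional on $\chi$, the identity $\EE{g}{Y^2}=\sum_i\chi_ix_i^2$ still holds, since it only requires $\E{g_ig_j}=0$ for $i\neq j$ and the joint independence of $\{g_i\}$ from $\{\chi_i\}$. Markov's inequality on $Y^2$ gives $Y^2\le 9\sum_i\chi_ix_i^2$ with probability at least $8/9$ over $g$; raising to the $p/2$-th power and invoking subadditivity yields the conditional bound $|Y|^p\le 3^p\sum_i\chi_i|x_i|^p$. A second Markov, applied directly to $\sum_i\chi_i|x_i|^p$ (whose mean is $\alpha\|x\|_p^p$ by linearity, so no independence among the $\chi_i$'s is needed), bounds it by $9\alpha\|x\|_p^p$ with probability at least $8/9$ over $\chi$. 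A union bound over the two failure events of probability $1/9$ produces $|Y|^p\le 9\cdot 3^p\,\alpha\|x\|_p^p=3^{2+p}\alpha\|x\|_p^p$ with probability at least $7/9$.

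I do not foresee a substantive obstacle. The two minor things to check are (i) that the conditional second-moment identity $\EE{g}{Y^2}=\sum_i\chi_ix_i^2$ genuinely requires nothing beyond pairwise uncorrelatedness of the $g_i$'s plus joint independence from the $\chi_i$'s, and (ii) that the $t\mapsto t^{p/2}$ subadditivity step is pointwise and therefore introduces no hidden constants when passing from the $\ell_2$-style expression to the $\ell_p$-style one.
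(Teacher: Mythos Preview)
Your proposal is correct and follows essentially the same route as the paper: condition on $\chi$, use the second-moment identity $\EE{g}{Y^2}=\sum_i(\chi_ix_i)^2$ (which needs only pairwise independence of the $g_i$'s), pass from $\ell_2$ to $\ell_p$ via the subadditivity/norm inequality $(\sum a_i)^{p/2}\le\sum a_i^{p/2}$, and then apply Jensen (first part) or two successive Markov bounds with a union bound (second part). The only cosmetic difference is that the paper phrases the subadditivity step as $\|z\|_2\le\|z\|_p$ and organizes the second-part Markov slightly differently, but the argument is the same.
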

\rnote{The dependence assumption are not stated clearly;
should be stated for for each of the sequences $g_i$ and $\chi_i$ separately, 
and then between them.
}\aanote{how about now?}

The proof of this lemma appears in Section~\ref{sec:pTypeProofs}.

\subsection{$\ell_p$-norm for $p\in[1,2]$}
\label{sec:ellp-norm}

We now use Alg.~\ref{alg:ellp-sketch}
and \ref{alg:ellp-reconstruction} to estimate the $\ell_p$ norm for
$p\in[1,2]$. We use Lemma~\ref{lem:randomSum1D} to bound the space usage.

\begin{theorem}
\label{thm:ellp}
Fix $p\in[1,2]$, $n\ge 6$, and $0<\eps<1/8$. There is a randomized linear
function $L:\R^n\to \R^{S}$, with $S=O(\eps^{-2-p}\log^2 n)$, and a deterministic
estimation algorithm $E$, such that for every $x\in \R^n$,
with probability at least $0.51$,
$E(L(x))$ is a factor $1+\eps$ approximation to $\|x\|_p^p$.
\end{theorem}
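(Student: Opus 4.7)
The plan is to follow the template of Theorems~\ref{thm:ell1} and~\ref{thm:momentEstimation}: I instantiate Alg.~\ref{alg:ellp-sketch} and Alg.~\ref{alg:ellp-reconstruction} with the given $p$, combined with a doubling search over the normalizer $r$ (as in the $\ell_1$ proof) so that effectively $r = \Theta(\|x\|_p)$. I set $\rho = \Theta(\eps)$, whence $k = \Theta(\eps^{-3})$, $\omega := 10\,\EE{w \mid M}{w} = O(\eps^{-3}\log n)$, and $m = \Theta(\omega \cdot \eps^{1-p})$, targeting total space $S = l m = O(\eps^{-2-p}\log^2 n)$.

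The principal new ingredient, compared to the $\ell_1$ analysis, is to replace the scalar $\ell_1$--Markov bound on the hash collisions by the $p$-type concentration from Lemma~\ref{lem:randomSum1D}. For fixed $i \in [n]$ and $j \in [l]$, I decompose $H_j(h_j(i)) = g_j(i) w_i^{1/p} x_i + \delta$, where $\delta := \sum_{f \neq i} g_j(f)\,\chi_f\,w_f^{1/p} x_f$ with indicators $\chi_f := \mathbf{1}[h_j(f) = h_j(i)]$ each occurring with probability $1/m$. First I condition on the event $\mathcal{E}_r := \{\sum_f w_f |x_f|^p \le \omega r^p\}$, which holds w.p.\ $\ge 8/9$ by Markov applied to $\E{\sum_f w_f |x_f|^p \mid M} = \EE{w \mid M}{w} \cdot \|x\|_p^p$ together with $r \ge \|x\|_p$ from the doubling. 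Conditioned on $\mathcal{E}_r$, Lemma~\ref{lem:randomSum1D} applied to the vector $(w_f^{1/p} x_f)_{f \neq i}$ with $\alpha = 1/m$ gives $|\delta|^p \le 3^{2+p}\,\omega\,r^p/m$ with probability $\ge 7/9$ over the pairwise-independent $g_j, h_j$.

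I then run the two-case analysis of Theorem~\ref{thm:momentEstimation}. In Case A, where $|\delta|/w_i^{1/p} \le \tfrac{\eps}{2p}|x_i|$, expanding $(|w_i^{1/p}x_i| \pm |\delta|)^p$ shows that $\hat x_i^{(j)} := |H_j(h_j(i))|^p/(r^p w_i)$ equals $(1 \pm \eps)|x_i|^p/r^p$, hence a $(1/w_i, e^\eps)$-approximator to $a_i := |x_i|^p/r^p$; in Case B, the reverse inequality $|w_i^{1/p}x_i| \le 2p|\delta|/\eps$ forces both $\hat x_i^{(j)} \le (3p/\eps)^p |\delta|^p/(w_i r^p)$ and $a_i \le (2p/\eps)^p |\delta|^p/(w_i r^p)$, which the choice of $m$ makes at most $1/w_i$, so the approximator property holds in this case too. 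The median over the $l = O(\log n)$ tables plus Chernoff boost the per-$i$ success to $1 - 1/n^2$, and a union bound gives it simultaneously for all $i \in [n]$.

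Applying Lemma~\ref{lem:nonUniformSampling} yields that $\hat\sigma$ is a $(\rho, e^{2\eps})$-approximator to $\sigma = \|x\|_p^p/r^p$; since $r = \Theta(\|x\|_p)$ makes $\sigma = \Theta(1)$, the final output $r^p\hat\sigma$ matches $\|x\|_p^p$ to factor $1 \pm O(\eps)$ after rescaling $\eps$. The doubling over $r$ is handled as in Theorem~\ref{thm:ell1}, since the $\mathcal{E}_r$ are monotone decreasing in $r$, so a single draw of $\{w_i\}$ works across all trials. The main obstacle I anticipate is the Case B accounting: obtaining $m = O(\omega\cdot\eps^{1-p})$ rather than the naive $O(\omega\cdot\eps^{-p})$ that the direct approximator requirement $\hat x_i^{(j)} \le 1/w_i$ imposes; I expect this save to come from using that the PSL reconstruction $R$ discards coordinates with $\hat x_i w_i < 4/\eps$, so Case B only needs to render $\hat x_i^{(j)} w_i < 4/\eps$ uniformly, which is an $\eps^{-1}$-weaker constraint than the naive one.
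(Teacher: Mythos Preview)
Your overall architecture is exactly the paper's: instantiate Algs.~\ref{alg:ellp-sketch} and~\ref{alg:ellp-reconstruction} with $\rho=\Theta(\eps)$, use Lemma~\ref{lem:randomSum1D} to control $|\delta|^p$, run a two-case analysis per hash table, median-boost, then apply PSL black-box with the doubling search over $r$. The parameters and space bound you target are the paper's.

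The obstacle you flag in Case~B is real, but your speculated fix is not how the paper resolves it, and your route would force you to reopen the PSL proof (you would have to argue that coordinates with $\hat a_i w_i<t$ \emph{and} $a_i w_i\lesssim t$ still satisfy the sandwich $\underline s_i\le s_i\le\overline s_i$, and then carefully merge the Case~A and Case~B outcomes across the $l$ tables before taking the median). The paper instead keeps PSL black-box by showing that \emph{both} cases yield a genuine $(1/w_i,e^\eps)$-approximator. The missing step is this: rather than bounding $\hat x_i^{(j)}$ and $a_i$ separately by $1/w_i$ (which costs $\eps^{-p}$), bound the \emph{difference}
\[
\Bigl||Y|^p-|x_iw_i^{1/p}|^p\Bigr|
\;\le\;
\bigl(|x_iw_i^{1/p}|+|\delta|\bigr)^p-|x_iw_i^{1/p}|^p.
\]
With $A:=|x_iw_i^{1/p}|$ and $B:=|\delta|$, convexity of $t\mapsto t^p$ makes $f(A)=(A+B)^p-A^p$ increasing in $A$, so under the Case~B hypothesis $A\le (2/\eps)B$ one gets
\[
(A+B)^p-A^p
\;\le\;
B^p\bigl((2/\eps+1)^p-(2/\eps)^p\bigr)
\;\le\;
B^p\cdot(2/\eps)^p\bigl((1+\eps/2)^p-1\bigr)
\;\le\;
O(1)\cdot\eps^{1-p}\,B^p.
\]
The point is that the \emph{difference} sees only the derivative $pA^{p-1}B$ at the threshold, which is smaller by a factor $\eps$ than the crude bound $(A+B)^p=O(\eps^{-p}B^p)$ you used. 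Plugging $B^p=|\delta|^p\le O(\omega/m)$ and choosing $m=\alpha\omega$ with $\alpha=\Theta(\eps^{1-p})$ now makes the difference at most~$1$, so $|Y|^p$ is a $(1,e^\eps)$-approximator to $|x_i|^pw_i$, and dividing by $w_i$ gives the needed $(1/w_i,e^\eps)$-approximator to $|x_i|^p$ in Case~B as well. From here your proof concludes as written, with the claimed $S=O(\eps^{-2-p}\log^2 n)$.
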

\rnote{If the proof is the same as $\ell_1$, without using these
$l_p$ inequalities, then why do the $\ell_1$ separately?}
\aanote{it's not a super-good reason, but the reason is that the analysis
  here is simpler. e.g., the +/-1 is not needed for $\ell_1$, but is
  needed for $\ell_p$.}

\begin{proof}
Our sketch function $L$ is given by Alg.  \ref{alg:ellp-sketch}. We
set $\rho=\eps/8$.  Let $\W=\W(k)$ for $k=\zeta\rho^{-1}\eps^{-2}$
obtained from the PSL (Lemma~\ref{lem:nonUniformSampling}).  Define
$\omega=10\EE{w\in\W}{w\mid M}$, where event $M=M(w)$ satisfies
$\Pr[M]\ge 1-O(n^{-2})$. Note that $\omega\le O(\eps^{-3}\log n)$. We
set $m=\alpha\omega$ for a constant $\alpha>0$ to be determined
later. \aanote{there seems to be 1/p exponent missing in some places.}

We now describe the exact reconstruction procedure, which will be
just several invocations of the algorithm 
\ref{alg:ellp-reconstruction} for different values of $r$.
As in Theorem~\ref{thm:ell1}, we guess $r>0$ starting from the highest 
possible value and halving it each time,
until we obtain a good estimate: $\|x\|_p\le r\le 4\|x\|_p$
(alternatively, one could prepare for
all possible $r$'s). To simplify the exposition, let us just assume 
in the sequel that $r=1$ and thus $1/4\le\|x\|_p\le 1$.

Let $F_p=\sum_{i=1}^n |x_i|^pw_i$. Note that $\E{F_p\mid \cap_i
  M(w_i)}=\|x\|_p^p\cdot\EE{w\in\W}{w\mid M(w)}\le \omega/10$, and
hence by Markov's inequality, $F_p\le \omega$ with probability at
least $8/9$. Call this event ${\cal E}$ and assume henceforth it
occurs.  To apply PSL, we need to prove that every $\hat x_i$ from
Alg.~\ref{alg:ellp-reconstruction} is a good approximator to $x_i$.
\begin{claim}
\label{clm:approximator_ellp}
Assume $F_p\le \omega$ and fix $i\in[n]$. If $\alpha\ge
3^{2+p}\eps^{1-p}$, then with high probability, $\hat x_i$ is a
$(1/w_i,e^\eps)$-approximator to $|x_i|^p$.
\end{claim}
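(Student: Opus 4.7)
The plan is to follow the same template as the $F_k$-moment proof (the claim inside Theorem~\ref{thm:momentEstimation}), replacing the moment-two collision bound by the $p$-type style bound of Lemma~\ref{lem:randomSum1D}, and then boosting a single-table bound to the median over $l=O(\log n)$ tables.

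Fix $i\in[n]$ and a hash table index $j\in[l]$. Let $Y\eqdef H_j(h_j(i))$ and decompose
$$
Y = g_j(i)\,w_i^{1/p}\,x_i + \delta,
\qquad
\delta \eqdef \sum_{f\ne i} g_j(f)\,\chi_f\,w_f^{1/p}x_f,
$$
where $\chi_f \eqdef \mathbf{1}[h_j(f)=h_j(i)]$ takes the value $1$ with probability $1/m$. Since $h_j$ and $g_j$ are (at least) pairwise independent and independent of each other, I can apply the second part of Lemma~\ref{lem:randomSum1D} to the vector $(w_f^{1/p}x_f)_{f\ne i}$ with $\alpha=1/m$. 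With probability at least $7/9$ this gives
$$
|\delta|^p \le \tfrac{3^{2+p}}{m}\sum_{f\ne i} w_f|x_f|^p \le \tfrac{3^{2+p}F_p}{m} \le \tfrac{3^{2+p}\omega}{m} = \tfrac{3^{2+p}}{\alpha}.
$$

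Conditioned on this event, I show that $\tfrac{|Y|^p}{w_i}$ is a $(1/w_i,e^\eps)$-approximator of $|x_i|^p$ by a case split on $|\delta|$ versus $w_i^{1/p}|x_i|$. If $|\delta|\le \tfrac{\eps}{2p}w_i^{1/p}|x_i|$, then directly $\tfrac{|Y|^p}{w_i}=(1\pm\tfrac{\eps}{2p})^p|x_i|^p\subseteq [e^{-\eps},e^\eps]|x_i|^p$, which is a $(0,e^\eps)$-approximator. Otherwise $w_i^{1/p}|x_i|\le \tfrac{2p}{\eps}|\delta|$, and using monotonicity of $A\mapsto (A+|\delta|)^p-A^p$ (identical to the bound used in the proof of Theorem~\ref{thm:momentEstimation}),
$$
\bigl||Y|^p - w_i|x_i|^p\bigr|
\le \bigl(w_i^{1/p}|x_i|+|\delta|\bigr)^p - \bigl(w_i^{1/p}|x_i|\bigr)^p
\le \Bigl(\tfrac{2p}{\eps}+1\Bigr)^p|\delta|^p - \bigl(\tfrac{2p}{\eps}\bigr)^p|\delta|^p
\le O\bigl(\eps^{1-p}\bigr)\cdot|\delta|^p.
$$
Combining with the bound $|\delta|^p\le 3^{2+p}/\alpha$ and the hypothesis $\alpha\ge 3^{2+p}\eps^{1-p}$ (adjusting the absolute constant in $3^{2+p}$ if needed to absorb the $O(1)$ factor), this additive error is at most $1$, so $\tfrac{|Y|^p}{w_i}$ is a $(1/w_i,1)$-approximator in this case. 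Either way, the desired $(1/w_i,e^\eps)$-approximator property holds for the single table $j$ with probability at least $7/9$.

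Finally, since $\hat x_i$ is the median over the $l=O(\log n)$ independent hash tables of the per-table estimates, a standard Chernoff bound on the number of successful tables promotes the $7/9$ single-table success probability to $1-n^{-\Omega(1)}$ for $\hat x_i$, as needed. The only real subtlety is in the case analysis, which for $p\in[1,2]$ uses exactly the same algebraic identity as the $F_k$ calculation; the rest is a routine invocation of Lemma~\ref{lem:randomSum1D} and a Chernoff boost. I expect the cleanest part of the argument to be the pairwise-independence verification for applying Lemma~\ref{lem:randomSum1D} (using pairwise independence of $h_j$ and $g_j$), and the main care-point to be choosing the constant $\alpha$ large enough to absorb the $(2p/\eps+1)^p-(2p/\eps)^p=O(\eps^{1-p})$ factor in the second case.
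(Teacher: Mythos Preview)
Your proposal is correct and follows essentially the same approach as the paper: bound $|\delta|^p$ via Lemma~\ref{lem:randomSum1D}, do the two-case split on $|\delta|$ versus $w_i^{1/p}|x_i|$, and then Chernoff-boost the median over $l=O(\log n)$ tables. The only cosmetic difference is that the paper invokes the expectation bound (first part of Lemma~\ref{lem:randomSum1D}) and then Markov to get $|\delta|^p\le 3/\alpha$ with probability $\ge 2/3$, whereas you invoke the high-probability bound (second part) directly to get $|\delta|^p\le 3^{2+p}/\alpha$ with probability $\ge 7/9$; this affects only the constants, and the paper's threshold in the case split is $\tfrac{2}{\eps}$ rather than your $\tfrac{2p}{\eps}$.
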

\begin{proof}
Fix $j\in[l]$; we shall prove that $|H_j(h_j(i))|^p$ is
a $(1,1+\eps)$-approximator to $|x_i|^pw_i$, with probability at
least $2/3$. Then we would be done by Chernoff bound, 
as $\hat x_i$ is a median over $l=O(\log n)$ independent trials $j\in[l]$.

For $f\in[n]$, define $y_f=g_j(f)\cdot x_iw_i^{1/p}$ if $h_j(f)=h_j(i)$ and
$y_f=0$ otherwise. Define $Y\triangleq H_j(h_j(i))=y_i+\delta$, where
$\delta=\sum_{f\neq i} y_f$. We apply Lemma~\ref{lem:randomSum1D} to conclude
that $\E{|\delta|^p}\le F_p/m$, and hence $|\delta|^p\le 3\omega/m\le
3/\alpha$ with probability at least $2/3$. 
Assume henceforth this is indeed the case.

Now we distinguish two cases. First, suppose $|x_iw_i^{1/p}|\ge
\tfrac{2}{\eps}\cdot |\delta|$. Then $|Y|^p = 
(1\pm\eps/2)|x_i|^pw_i$. Otherwise, $|x_iw_i^{1/p}|<
\tfrac{2}{\eps}\cdot |\delta|$, and then
\begin{eqnarray*}
\left||Y|^p-|x_iw_i^{1/p}|^p\right|
&\le&
(|x_iw_i^{1/p}|+|\delta|)^p-|x_iw_i^{1/p}|^p
\\
&\le&
|\delta|^p\cdot \left((2/\eps+1)^p-2/\eps\right)
\\
&\le&
|\delta|^p\cdot (2/\eps)^p\cdot (1+p\eps-1)
\\
&\le&
p2^p\cdot3\cdot \eps^{1-p}/\alpha.
\end{eqnarray*}

\aanote{updated the calculations to reflect the tight(er) bound.}

Thus, if we set $\alpha\ge 3^{2+p}(1/\eps)^{p-1}$, 
then in both cases $|Y|^p$ is a $(1,e^\eps)$-approximator to $|x_i|^pw_i$
(under the event that occurs with probability at least $2/3$). 
\end{proof}

We can now complete the proof of Theorem~\ref{thm:ellp}.
Applying Lemma~\ref{lem:nonUniformSampling}, we obtain that its output,
$\hat\sigma=\hat\sigma(r)$, is a $(\eps/8, e^{2\eps})$-approximator to
$\|x\|_p$, with probability at least $2/3-1/9-1/n^2\ge 0.51$.
\end{proof}

\subsection{Mixed and cascaded norms}
\label{sec:mixednorms}

We now show how to estimate mixed norms such as the $\ell_{p,q}$
norms. In the latter case, the input is a matrix $x\in \R^{n_1\cdot
  n_2}$, and the $\ell_{p,q}$ norm is  $\|x\|_{p,q}=(\sum_i
\|x_i\|_q^p)^{1/p}$, where $x_i$ is the $i$th row in the matrix. 

We show a more general theorem, for the norm $\ell_p(X)$,
which is defined similarly for a general Banach space $X$; 
the $\ell_{p,q}$ norms will be just
particular cases. To state the general result, we need the following
definition.

\begin{definition}
\label{def:generalizedType}
Fix $p\ge 1$, $n,\kappa\in \N$, $\omega>0$, $\delta\in[0,1)$, and let $X$ be a finite
dimensional Banach space. The 
{\em the generalized $p$-type}, denoted $\alpha(X,p,n,\kappa,\omega,\delta)$, is
the biggest constant $\alpha>0$ satisfying the following: 
For each $i\in[n]$, 
let $g_i\in \{-1,+1\}$ be a random variable drawn uniformly at random, 
and let $\chi_i\in\zo$ be a random variable that is equal $1$ with probability
$1/\alpha$ and $0$ otherwise. Furthermore, each family $\{g_i\}_i$ and
$\{\chi\}_i$ is $\kappa$-wise independent, and the two families are
independent of each other.
Then, for every $x_1,\ldots x_n\in X$ satisfying $\sum_{i\in[n]}
\|x_i\|_X^p\le \omega$, 
$$
\textstyle
  \Pr\left[ \big\|\sum_{i\in[n]} g_{i} \chi_ix_i\big\|_X^p \le 1 \right] 
  \ge 1-\delta.
$$
\end{definition}

\begin{theorem}
\label{thm:ellp-product}
Fix $p\ge 1$, $n\ge2$, and $0<\eps<1/3$. 
Let $X$ be a Banach space admitting a linear sketch $L_X:X\to \R^{S_X}$, 
with space $S_X=S_X(\eps)$, 
and let $E_X:\R^{S_X}\to \R$ be its reconstruction procedure. 
\rnote{$L_X$ does not depend on $\eps$? 
Actually both $L_X$ and $E_X$ should depend on $\eps'$ which is generally
different from $\eps$ and actually only a single value $\eps'=\eps/2$}
\aanote{yes, $L_X$ does depend on $\eps$, as was tried to be conveyed
  via $S_X$. shall we make this more explicit?}

Then there is a randomized linear
function $L:X^n\to \R^{S}$, and an
estimation algorithm $E$ which, for any $x\in X^n$,
given the sketch $Lx$, outputs a factor $1+\eps$
approximation to $\|x\|_{p,X}$,  with probability at least $0.51$.

Furthermore, $S\le S_X(\eps/2)\cdot \alpha(X,p,n,\kappa,O(p\eps^{-4}\log
n),2/3)\cdot O(\log n)$,
where $\kappa$ is such that each function $g_j$ and $h_j$ is $\kappa$-wise
independent.
\end{theorem}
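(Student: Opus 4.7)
My plan is to extend the analysis of Theorem~\ref{thm:ellp} from scalar coordinates to $X$-valued coordinates, with the linear sketch $L_X$ playing the role of absolute value inside each hash cell. The sketching procedure mirrors Alg.~\ref{alg:ellp-sketch}: draw precisions $\{w_i\}$ via PSL with $\rho=\eps/8$ and $k=\zeta\rho^{-1}\eps^{-2}$, choose $l=O(\log n)$ $\kappa$-wise independent hash/sign pairs $(h_j,g_j)$, and set $m=\alpha(X,p,n,\kappa,\omega,2/3)$ for $\omega=O(p\eps^{-4}\log n)$. By the linearity of $L_X$, every cell can be maintained implicitly via
\[
H_j(z)\;\triangleq\;\sum_{f:h_j(f)=z}g_j(f)\,w_f^{1/p}\,L_X(x_f)\;=\;L_X\Big(\sum_{f:h_j(f)=z}g_j(f)\,w_f^{1/p}\,x_f\Big),
\]
using accuracy $\eps/2$ inside each invocation of $L_X$. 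The reconstruction parallels Alg.~\ref{alg:ellp-reconstruction}: for each $i$ I run $E_X$ on each of the $l$ cells $H_j(h_j(i))$ to obtain a $(1\pm\eps/2)$-multiplicative estimate of the underlying $X$-norm, raise to the $p$th power, divide by $w_i$, take the median over $j\in[l]$, and feed the resulting $\hat x_i$'s into PSL's reconstruction $R$. A binary search on a normalization $r$ (as in Theorems~\ref{thm:ell1} and~\ref{thm:ellp}) reduces to $\|x\|_{p,X}\in[1/4,1]$, so I may assume $r=1$.

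The heart of the proof is to show that every $\hat x_i$ is a $(1/w_i,e^{O(\eps)})$-approximator of $\|x_i\|_X^p$. Write the cell's raw content as $Y=y_i+\delta$ with $y_i=g_j(i)w_i^{1/p}x_i$ and $\delta=\sum_{f\ne i}\chi_f g_j(f)w_f^{1/p}x_f$, where $\chi_f=\mathbf{1}[h_j(f)=h_j(i)]$ is pairwise-independent Bernoulli$(1/m)$. Conditioned on the high-probability event $\mathcal{E}=\{F_p\le\omega\}$ for $F_p\triangleq\sum_f w_f\|x_f\|_X^p$, the generalized $p$-type (Definition~\ref{def:generalizedType}) directly yields $\|\delta\|_X^p\le 1$ with probability at least $2/3$, since $\sum_f\|w_f^{1/p}x_f\|_X^p=F_p\le\omega$ and $m\ge\alpha(X,p,n,\kappa,\omega,2/3)$. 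The reason I inflate $\omega$ from the PSL-native bound $O(\eps^{-3}\log n)$ on $\E{F_p\mid\cap_i M(w_i)}$ up to $O(p\eps^{-4}\log n)$ is to absorb, via the homogeneity of the $p$-type definition, the $\poly(p/\eps)$ slack that appears when $\|Y\|_X^p$ is expanded around $\|y_i\|_X^p$ in the next step.

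With $\|\delta\|_X^p$ controlled, I case-split as in Claim~\ref{clm:approximator_ellp}: when $\|y_i\|_X\ge(2p/\eps)\|\delta\|_X$ the triangle inequality gives $\|Y\|_X^p=(1\pm O(\eps))\|y_i\|_X^p$, and otherwise $\bigl|\|Y\|_X^p-\|y_i\|_X^p\bigr|\le\|\delta\|_X^p\cdot(2p/\eps)^p\cdot((1+\eps/(2p))^p-1)=O(1)$ by the scaling of $\omega$. Composing with the $(1\pm\eps/2)$-multiplicative guarantee of $E_X$ on $\|Y\|_X$ turns each cell into a $(1,e^\eps)$-approximator of $w_i\|x_i\|_X^p$ with probability $\ge 2/3$; a Chernoff bound over $l=O(\log n)$ tables followed by a union bound over $i\in[n]$ makes every $\hat x_i$ a $(1/w_i,e^\eps)$-approximator of $\|x_i\|_X^p$ with high probability. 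Applying PSL then yields a $(\rho,e^{O(\eps)})$-approximator of $\|x\|_{p,X}^p$, which under the normalization is a $(1\pm O(\eps))$-multiplicative estimate after rescaling $\eps$; the total space is $l\cdot m\cdot S_X(\eps/2)$, matching the stated bound.

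The main obstacle I foresee is the calibration of $\omega$ in the $p$-type parameter so that the bound on $\|\delta\|_X^p$ is tight enough to survive the $p$th-power triangle-inequality expansion: the PSL-native budget $O(\eps^{-3}\log n)$ is not by itself sufficient, and the extra polynomial-in-$(p,1/\eps)$ factor is exactly what converts the $p$-type's unit-normalized conclusion $\|\cdot\|_X^p\le 1$ into the $1/w_i$-additive approximator demanded by PSL. A secondary point to verify is the independence bookkeeping across the random coins of $L_X$ (if any), the pairs $(h_j,g_j)$, and the precisions $\{w_i\}$, so that the high-probability events for $\mathcal E$, each $E_X$, and PSL can all be intersected without inflating the failure probability past PSL's $1/3$ tolerance.
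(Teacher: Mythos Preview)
Your proposal is correct and follows essentially the same approach as the paper: extend Alg.~\ref{alg:ellp-sketch} by storing $L_X$-sketches in each hash cell, set $\rho=\eps/8$, invoke the generalized $p$-type with the inflated parameter $O(p\eps^{-4}\log n)$ to control $\|\delta\|_X$, do the same two-case triangle-inequality analysis to show each cell gives a $(1,e^\eps)$-approximator to $w_i\|x_i\|_X^p$, take medians, and feed the result to PSL. The only cosmetic difference is that in the ``small $y_i$'' case the paper bounds $\|a\|_X^p\le 1$ directly, whereas you bound the difference $\bigl|\|Y\|_X^p-\|y_i\|_X^p\bigr|$ as in the $F_k$ argument; both serve the same purpose.
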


We note that the result for $\ell_{p,q}$ norms will follow by proving
some particular bounds on the parameter $\alpha$, the generalized
$p$-type. We discuss
these implications after the proof of the theorem.

\begin{proof}[Proof of Theorem~\ref{thm:ellp-product}]
Our sketch function $L$ is given by algorithm
\ref{alg:ellp-sketch}, with one notable modification. $x_i$'s are now
vectors from $X$ and the hash table cells hold sketches given by
sketching function $L_X$ up to $1+\eps/2$ approximation. In
particular, each cell of hash table $H_j(z)=\sum_{i:h_j(i)=z}
g_j(i)\cdot w_i^{1/p}\cdot L_Xx_i$. Furthermore, abusing
notation, we use the notation $\|H_j(z)\|_q$ for some $z\in[m]$ to
mean the result of the $E$-estimation algorithm on the sketch
$H_j(z)$ (since it is a $1+\eps/2$ approximation, we can afford such
additional multiplicative error).

We set $\rho=\eps/8$. 
Let $\W=\W(k)$ by for $k=\zeta\rho^{-1}\eps^{-2}$ obtained from the PSL
Lemma~\ref{lem:nonUniformSampling}.
Define $\omega=10\EE{w\in\W}{w\mid M}$, where event $M=M(w)$ satisfies $\Pr[M]\ge 1-O(n^{-2})$. Note that $\omega\le
O(\eps^{-3}\log n)$. We set $m$ later.

We now describe the exact reconstruction procedure, which will be
just several invocations of the algorithm 
\ref{alg:ellp-reconstruction} for different values of $r$.
As in Theorem~\ref{thm:ell1}, we guess $r$ starting from high and halving it each time,
until we obtain a good estimate --- $\|x\|_{p,X}\le r\le 4\|x\|_{p,X}$
(alternatively, one could prepare for
all possible $r$'s). For simplified exposition, we just assume that
$1/4\le\|x\|_{p,X}\le 1$ and $r=1$ in the rest.

Let $F_{p,X}=\sum_{i=1}^n \|x_iw_i^{1/p}\|_X^p$. Note that
$\E{F_{p,X}\mid \cap M(w_i)}=\|x\|_X^p\cdot\EE{w\in\W}{w\mid M(w)}\le
\omega/10$, and hence $F_{p,X}\le \omega$ with probability at least
$8/9$ by Markov's bound. Call this event ${\cal E}$.  To apply PSL, we
need to prove that $\hat x_i$'s from
Alg.~\ref{alg:ellp-reconstruction} are faithful approximators. For
this, we prove that, for appropriate choice of
$\alpha=\alpha(p,X,\eps,n)$, for each $j\in[l]$, $\|H_j(h_j(i))\|_X^p$
is a $(1,1+\eps)$-approximator to $\|x_i\|_X^pw_i$, with probability
at least $2/3$. This would imply that, since $\hat x_i$ is a median
over $O(\log n)$ independent trials, $\hat x_i$ is a
$(1/w_i,1+\eps)$-approximator to $\|x_i\|_X^p$.  Once we have such a
claim, we apply Lemma~\ref{lem:nonUniformSampling}, and conclude that
the output, $\hat\sigma=\hat\sigma(r)$, is a $(\eps/8,
1+2\eps)$-approximator to $\|x\|_{p,X}$, with probability at least
$2/3-1/9-1/n\ge 0.51$.

\begin{claim}
Fix $p\ge1$ and $\omega\in \R_+$. Let
$m=\alpha(X,p,\kappa,3p\omega/\eps,2/3)$, the 
generalized $p$-type of $X$.

Assume $F_{p,X}\le \omega$ and fix $i\in[n],j\in[l]$. Then $\|H_j(h_j(i))\|_X^p$ is a
$(1,1+\eps)$-approximator to $\|x_i\|_X^p w_i$ with probability at least 2/3.
\end{claim}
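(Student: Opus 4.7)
The plan is to mimic the proof of Claim \ref{clm:approximator_ellp} from the scalar $\ell_p$ setting, with the Khintchine-type Lemma \ref{lem:randomSum1D} replaced by a direct appeal to the generalized $p$-type assumption of Definition \ref{def:generalizedType}. Fix $i \in [n]$ and $j \in [l]$, and let $\chi_f \in \{0,1\}$ be the indicator that $h_j(f) = h_j(i)$. By the $\kappa$-wise independence of $h_j$, each $\chi_f$ (for $f \ne i$) equals $1$ with probability $1/m$, the family $\{\chi_f\}_f$ is $\kappa$-wise independent, and it is independent of $\{g_j(f)\}_f$ — precisely the setting of the generalized $p$-type with $\alpha = m$. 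I would then decompose
\begin{equation*}
H_j(h_j(i)) \;=\; g_j(i)\, w_i^{1/p}\, x_i \;+\; \delta,
\qquad
\delta \;=\; \sum_{f \ne i} g_j(f)\, \chi_f\, w_f^{1/p}\, x_f,
\end{equation*}
so the task reduces to showing $\|\delta\|_X$ is either dominated by $\|w_i^{1/p} x_i\|_X$, or is itself absolutely small.

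The next step is a rescaling. Set $y_f = c \cdot w_f^{1/p} x_f$ for a constant $c = c(p, \eps)$ with $c^p = \Theta(p/\eps)$. Under the hypothesis $F_{p,X} \le \omega$, this gives $\sum_f \|y_f\|_X^p = c^p F_{p,X} \le 3p\omega/\eps$, matching the $\omega$-parameter in the claim's choice $m = \alpha(X, p, \kappa, 3p\omega/\eps, \cdot)$. Definition \ref{def:generalizedType} (tuned so the failure probability is at most $1/3$) then yields, with probability at least $2/3$ over the choice of $g_j, h_j$,
\begin{equation*}
\bigl\|\textstyle\sum_f g_j(f)\, \chi_f\, y_f\bigr\|_X^p \;\le\; 1,
\qquad \text{equivalently,} \qquad
\|\delta\|_X^p \;\le\; c^{-p} \;=\; \Theta(\eps/p).
\end{equation*}

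From here the same case analysis as in Claim \ref{clm:approximator_ellp} finishes the proof. In the ``large'' case $\|w_i^{1/p} x_i\|_X \ge (2p/\eps)\|\delta\|_X$, the triangle inequality and the bound $(1+\eps/(2p))^p \le 1+\eps$ give $\|H_j(h_j(i))\|_X^p = (1\pm\eps)\|x_i\|_X^p w_i$, which is a $(0,1+\eps)$-approximator. In the ``small'' case $\|w_i^{1/p} x_i\|_X < (2p/\eps)\|\delta\|_X$, I would expand
\begin{equation*}
\bigl|\|H_j(h_j(i))\|_X^p - \|w_i^{1/p} x_i\|_X^p\bigr|
\;\le\; (\|w_i^{1/p} x_i\|_X + \|\delta\|_X)^p - \|w_i^{1/p} x_i\|_X^p
\;\lesssim\; p\, 2^p\, \eps^{1-p}\, \|\delta\|_X^p,
\end{equation*}
exactly as in the scalar case, and substitute the above bound on $\|\delta\|_X^p$ so that the entire right-hand side is at most $1$. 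Combining the two cases, $\|H_j(h_j(i))\|_X^p$ is a $(1, 1+\eps)$-approximator to $\|x_i\|_X^p w_i$ whenever the generalized-$p$-type event holds, which has probability at least $2/3$.

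The main obstacle is calibrating the rescaling constant $c$ (equivalently, the $\omega$-parameter fed into the generalized $p$-type) to balance two opposing requirements: it must be small enough that $c^p F_{p,X}$ stays within $3p\omega/\eps$, yet large enough that $c^{-p}$ suffices to absorb the $p$-th-power expansion $p\, 2^p\, \eps^{1-p}\|\delta\|_X^p$ arising in the small case. The factor $3p/\eps$ in the claim's statement is exactly what makes these two constraints compatible.
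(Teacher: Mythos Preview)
Your proposal is correct and follows essentially the same route as the paper: decompose $H_j(h_j(i))$ into the signal $g_j(i)w_i^{1/p}x_i$ plus noise $\delta=\sum_{f\ne i}g_j(f)\chi_f w_f^{1/p}x_f$, rescale so that the generalized $p$-type hypothesis (with $\omega$-parameter $3p\omega/\eps$ and $\alpha=m$) bounds $\|\delta\|_X^p$, and then split into the large/small cases with threshold $(2p/\eps)\|\delta\|_X$. The only cosmetic difference is that in the small case the paper bounds $\|a\|_X^p\le(2p/\eps+1)^p\|\delta\|_X^p\le 1$ directly, whereas you expand the difference $\bigl|\|a\|_X^p-\|w_i^{1/p}x_i\|_X^p\bigr|$ in the style of Claim~\ref{clm:approximator_ellp}; the two are equivalent.
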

\begin{proof}
For $f\in[n]$, define $y_f=g_j(f)\cdot x_iw^{1/p}$ if $h_j(f)=h_j(i)$ and
$y_f=0$ otherwise. Then, $a\triangleq
\sum_{f\in[n]:h_j(f)=h_j(i)}g_j(i)x_i=y_i+\delta$, where 
$\delta=\sum_{f\neq i} y_f$. Then, by the definition of generalized
$p$-type of $X$, whenever $m\ge
\alpha(X,p,\kappa,\omega\cdot\tfrac{3p}{\eps},2/3)$, we 
have that $\|\delta\|_X\le \eps/3$, with probability at least $2/3$.

Now we distinguish two cases. First, suppose $\|x_iw_i^{1/p}\|_X\ge
\tfrac{2p}{\eps}\cdot \|\delta\|_X$. Then $\|a\|_X^p\approx
(1\pm\eps)\|x_i\|_X^pw_i$. Otherwise, if $\|x_iw_i^{1/p}\|_X<
\tfrac{2p}{\eps}\cdot \|\delta\|_X$, then
$$
\|a\|_X^p\le \left(\|x_iw_i^{1/p}\|_X+\|\delta\|_X\right)^p\le (2p\|\delta\|_X/\eps+\|\delta\|_X)^p\le
\|\delta\|_X^p\cdot(2p/\eps+1)^p\le 1. 
$$
Hence, we conclude that $\|a\|_X^p$ (and thus $\|H_j(h_j(i))\|_X^p$) is a
$(1,1+\eps)$-approximator to $\|x_i\|_X^pw_i$, with probability at least $2/3$.
\end{proof}
The claim conclude the proof of Theorem~\ref{thm:ellp-product}. Note
that the space is $S=O(S_X(\eps/2)\cdot \alpha(X,p,\kappa,O(p\eps^{-4}\log
n),2/3)\cdot \log n)$.
\end{proof}

We now show the implications of the above theorem. For this, we present
the following lemma, whose proof is included in Section~\ref{sec:pTypeProofs}.

\begin{lemma}
\label{lem:pTypeHighD}
Fix $n,m\in \N$, $\omega\in \R_+$, and a finite dimensional Banach space $X$. 
We have the following bounds on the generalized $p$-type:
\begin{enumerate}
\renewcommand{\theenumi}{(\alph{enumi})}
\item
\label{it:1pq2}
if $0< p\le q\le 2$, then
$\alpha(\ell_q^{m},p,n,2,\omega,2/3) \le O(\omega)$.
\item
\label{it:pq2}
if $p,q\ge 2$, 
we have that
$\alpha(\ell_q^{m},p,n,2q,\omega,2/3) \le 9^2q^{O(1)}\omega^{2/p}\cdot
n^{1-2/p}$,
and if $q\ge 2$ and $p\in(0,2)$, then
$\alpha(\ell_q^{m},p,n,2q,\omega,2/3) \le 9^2q^{O(1)}\omega^{2/p}$.
\item
\label{it:Xp}
for $p\ge 1$, we have that
$\alpha(X,p,n,2,\omega,2/3) \le O(n^{1-1/p}\omega^{1/p})$,
and for $p\in(0,1)$, we have that
$\alpha(X,p,n,2,\omega,2/3) \le O(\omega^{1/p})$.
\end{enumerate}
\end{lemma}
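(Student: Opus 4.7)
My plan is to handle the three parts separately, since each exploits different structure.

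For part (a), the key idea is size-truncation. Let $I_{\text{big}}=\{i:\|x_i\|_q>1\}$ and $I_{\text{small}}=[n]\setminus I_{\text{big}}$. Since each $i\in I_{\text{big}}$ contributes more than $1$ to $\sum_i\|x_i\|_q^p\le\omega$, we have $|I_{\text{big}}|<\omega$, so by pairwise independence and a union bound, with probability at least $5/6$ no $\chi_i$ for $i\in I_{\text{big}}$ equals $1$ whenever $\alpha\ge 6\omega$; in that event $\|\sum_{i\in I_{\text{big}}} g_i\chi_i x_i\|_q=0$. For $I_{\text{small}}$, I would bound $\E{\|\sum_{i\in I_{\text{small}}} g_i\chi_i x_i\|_q^q}$ coordinate-wise: pairwise independence gives $\E{Z_j^2}=(1/\alpha)\sum_i x_{i,j}^2$ for each coordinate sum $Z_j$, and Jensen (since $q\le 2$) yields $\E{|Z_j|^q}\le \E{Z_j^2}^{q/2}$. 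Summing over $j$ and applying Minkowski's inequality for $\ell_{2/q}$ (valid since $2/q\ge 1$) gives total at most $(\sum_i\|x_i\|_q^2)^{q/2}/\alpha^{q/2}$. On $I_{\text{small}}$, $\|x_i\|_q\le 1$ and $p\le 2$, so $\|x_i\|_q^2\le\|x_i\|_q^p$, hence $\sum\|x_i\|_q^2\le\omega$. A Markov bound yields $\|\sum_{I_{\text{small}}} g_i\chi_i x_i\|_q\le 1$ with probability at least $5/6$ once $\alpha=\Theta(\omega)$, and a final union bound closes the argument.

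For part (c), which must hold for an arbitrary Banach space $X$, I would rely only on the triangle inequality $\|\sum_i g_i\chi_i x_i\|_X\le\sum_i\chi_i\|x_i\|_X$. For $p\ge 1$, \Holder's inequality yields $\sum_i\|x_i\|_X\le n^{1-1/p}\omega^{1/p}$, whence $\E{\sum_i\chi_i\|x_i\|_X}\le n^{1-1/p}\omega^{1/p}/\alpha$; Markov applied to $\{\sum_i\chi_i\|x_i\|_X>1\}$ then gives $\alpha=O(n^{1-1/p}\omega^{1/p})$. For $p<1$, subadditivity of $t\mapsto t^p$ on $[0,\infty)$ yields $(\sum_i\chi_i\|x_i\|_X)^p\le\sum_i\chi_i\|x_i\|_X^p$, so $\E{\|\sum_i g_i\chi_i x_i\|_X^p}\le\omega/\alpha$, and Markov delivers $\alpha=O(\omega)\le O(\omega^{1/p})$.

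For part (b), where $q\ge 2$, the main tool is the type-$2$ constant $T_2(\ell_q)=O(\sqrt q)$, which follows from Khintchine applied in each coordinate (needing the $q$-th Rademacher moment, hence $2q$-wise independence of the $g_i$) combined with Minkowski in $\ell_{q/2}$ over coordinates. Conditioning on $\chi$ gives $\E{\|\sum_i g_i\chi_i x_i\|_q^q\mid\chi}\le (C\sqrt q)^q(\sum_i\chi_i\|x_i\|_q^2)^{q/2}$; averaging over $\chi$ via an analogous Rosenthal-type bound for the $2q$-wise independent nonnegative sum $\sum_i\chi_i\|x_i\|_q^2$ reduces matters to the deterministic $\sum_i\|x_i\|_q^2$. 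For $p\ge 2$, the power-mean inequality gives $\sum_i\|x_i\|_q^2\le n^{1-2/p}\omega^{2/p}$, producing the $n^{1-2/p}$ factor; for $p\in(0,2)$, the sequence embedding $\ell_p\subseteq\ell_2$ yields $\sum_i\|x_i\|_q^2\le\omega^{2/p}$ and avoids the $n$-dependence. The main obstacle I expect is precisely this part (b): tracking the polynomial-in-$q$ constants through nested expectations over Rademacher and Bernoulli families that are only $2q$-wise independent, rather than through direct application of the standard fully-independent Khintchine/Rosenthal inequalities.
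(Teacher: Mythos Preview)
Your arguments for parts~(b) and~(c) are sound and close in spirit to the paper's. For~(c) they essentially coincide. For~(b), however, you are making life harder than necessary: the paper never takes a full expectation of $\|\sum_i g_i\chi_i x_i\|_q^q$ over both $g$ and $\chi$. Instead it conditions on $\chi$, applies Markov over $g$ to obtain $\|\sum_i g_i\chi_i x_i\|_q^2 \le 9\,q^{O(1)}\sum_i \chi_i\|x_i\|_q^2$ with probability $\ge 8/9$, and then---because the right-hand side is \emph{linear} in $\chi$---a second Markov over $\chi$ finishes immediately, with no Rosenthal-type moment bound required. The obstacle you anticipate disappears under this two-stage Markov.

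Part~(a), however, has a genuine gap. Your ``Minkowski in $\ell_{2/q}$'' step asserts
\[
\sum_j\Big(\sum_i x_{ij}^2\Big)^{q/2} \;\le\; \Big(\sum_i\|x_i\|_q^2\Big)^{q/2},
\]
but this is false in general: for $q=1$, $x_1=(1,0)$, $x_2=(0,1)$ (both in $I_{\text{small}}$), the left side is $2$ and the right side is $\sqrt 2$. Minkowski in $\ell_{2/q}$ actually yields the reverse inequality. Patching with subadditivity of $t\mapsto t^{q/2}$ gives only $\sum_j(\sum_i x_{ij}^2)^{q/2}\le\sum_i\|x_i\|_q^q$, which on $I_{\text{small}}$ leads to $\alpha=O(\omega^{2/q})$ rather than $O(\omega)$, so the big/small split does not rescue the bound.

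The paper's argument for~(a) uses no truncation; it is the same two-stage Markov as in~(b). Fixing $\chi$, Jensen plus subadditivity of $t^{q/2}$ give $\mathbb E_g\big[\sum_j|\sum_i g_i\chi_i x_{ij}|^q\big]\le\sum_i\chi_i\|x_i\|_q^q$; Markov over $g$ then yields $\|\sum_i g_i\chi_i x_i\|_q^q\le 9\sum_i\chi_i\|x_i\|_q^q$ with probability $\ge 8/9$. The key step your route lacks is to now raise to the power $p/q\le 1$ and apply the norm inequality $\|\cdot\|_q\le\|\cdot\|_p$ to the sequence $(\chi_i\|x_i\|_q)_i$, obtaining $\|\sum_i g_i\chi_i x_i\|_q^p\le 9\sum_i\chi_i\|x_i\|_q^p$. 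This right-hand side is linear in $\chi$, so $\mathbb E_\chi[\cdot]\le 9\omega/\alpha$ and a second Markov delivers $\alpha=O(\omega)$.
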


Combining Theorem~\ref{thm:ellp-product} and
Lemma~\ref{lem:pTypeHighD}, also using
Theorem~\ref{thm:momentEstimation}, we obtain the following linear sketches
for $\ell_{p,q}$ norms, which are optimal up to $(\eps^{-1}\log
n)^{O(1)}$ factors (see, e.g., \cite{JW-cascaded}\aanote{verify}).

\begin{corollary}
\label{cor:mixed-ellpq}
There exist linear sketches for $\ell_p^{n_1}(\ell_q^{n_2})$,
for $n_1,n_2\le n$ and $p,q\ge 1$, with the following space bounds $S$.

For $0< p\le q\le 2$, the bound is $S=(\eps^{-1}\log n)^{O(1)}$.

For $q\ge 2$ and $p\in(0,2)$, the bound is $S=n_2^{1-2/q}\cdot
(pq\eps^{-1}\log n)^{O(1)}$. 

For $p,q\ge 2$, the bound is $S=n_1^{1-2/p}n_2^{1-2/q}\cdot
(pq\eps^{-1}\log n)^{O(1)}$. 

For $p\ge 1$ and $q\in(0,p)$, the bound is $S=n_1^{1-1/p}\cdot
(\eps^{-1}\log n)^{O(1)}$. 

For $p\in(0,1)$ and $q\in(0,p)$, the bound is $S=(\eps^{-1}\log n)^{O(1)}$.
\end{corollary}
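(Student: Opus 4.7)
The plan is a case-by-case instantiation of Theorem~\ref{thm:ellp-product} with $X=\ell_q^{n_2}$. Recall that the theorem yields
\[
S \le S_X(\eps/2)\cdot \alpha\bigl(\ell_q^{n_2},p,n_1,\kappa,O(p\eps^{-4}\log n),2/3\bigr)\cdot O(\log n),
\]
so for each regime I need (i) a base linear sketch $L_X$ for the inner $\ell_q^{n_2}$ norm, with its space $S_X(\eps)$, and (ii) a bound on the generalized $p$-type of $\ell_q^{n_2}$ taken from Lemma~\ref{lem:pTypeHighD} with $\omega=O(p\eps^{-4}\log n)$. The final space is simply the product.

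For the base sketch $L_X$: when $q\in[1,2]$ I invoke Theorem~\ref{thm:ellp}, giving $S_X=(\eps^{-1}\log n)^{O(1)}$; when $q>2$ I invoke Theorem~\ref{thm:momentEstimation}, giving $S_X=n_2^{1-2/q}\cdot(q\eps^{-1}\log n)^{O(1)}$; and when $q\in(0,1)$ (relevant only for the last two regimes, where $q<p$) I use the standard $p$-stable-distribution sketch of \cite{I00b}, which yields $S_X=(\eps^{-1}\log n)^{O(1)}$.

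For the $p$-type factor, I read off Lemma~\ref{lem:pTypeHighD} in each regime. Part (a) gives $\alpha=O(\omega)=(\eps^{-1}\log n)^{O(1)}$ whenever $p\le q\le 2$. Part (b) gives $\alpha=q^{O(1)}\omega^{2/p}$ (with an extra factor $n_1^{1-2/p}$ only when $p\ge 2$) whenever $q\ge 2$. Part (c) gives $\alpha=O(n_1^{1-1/p}\omega^{1/p})$ for $p\ge 1$ and $\alpha=O(\omega^{1/p})$ for $p\in(0,1)$, both with no dependence on $q$; these are the right bounds when $q<p$, since $\ell_q^{n_2}$ is then treated as a generic finite-dimensional Banach space. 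In every case the $\omega^{1/p}$ or $\omega^{2/p}$ factor is subsumed by the $(pq\eps^{-1}\log n)^{O(1)}$ polylogarithmic slack.

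Multiplying $S_X$, $\alpha$, and the extra $\log n$ factor gives the bound claimed in each of the five regimes of the corollary. For example, in the central case $p,q\ge 2$,
\[
S \le \bigl(n_2^{1-2/q}(q\eps^{-1}\log n)^{O(1)}\bigr)\cdot\bigl(n_1^{1-2/p} q^{O(1)}\omega^{2/p}\bigr)\cdot\log n = n_1^{1-2/p}n_2^{1-2/q}\cdot(pq\eps^{-1}\log n)^{O(1)}.
\]
The remaining four regimes are analogous bookkeeping. The main obstacle here is not the corollary itself but Lemma~\ref{lem:pTypeHighD}: the $p,q\ge 2$ bound requires a Rosenthal-type tail inequality for sums of independent $\ell_q$-valued random vectors (to extract the $n_1^{1-2/p}$ and $\omega^{2/p}$ factors), while the easier $p\le q\le 2$ case follows from Khintchine's inequality applied coordinate-wise, with the sparsity $1/\alpha$ yielding the factor $\omega$. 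Once those Banach-space estimates are in hand, the corollary is a mechanical substitution.
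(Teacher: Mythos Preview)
Your proposal is correct and follows exactly the paper's approach: the paper's entire ``proof'' is the sentence stating that the corollary is obtained by combining Theorem~\ref{thm:ellp-product} with Lemma~\ref{lem:pTypeHighD} (and Theorem~\ref{thm:momentEstimation} for the inner sketch when $q>2$), which is precisely the case-by-case instantiation you carry out. Your remark about Lemma~\ref{lem:pTypeHighD} being the real work is apt, but note that the paper already proves that lemma separately in Section~\ref{sec:pTypeProofs}, so for the corollary itself nothing beyond the mechanical substitution you describe is needed.
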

\rnote{I did not check it; what about comparison with JW?}
\aanote{add all the cases: $q\in[0,2]$ and $p>q$; $q>2$ and $p\in[0,2]$. the case
  of $p>q$ also works for $p>0$}

\section{Applications III: Sampling from the Stream}
\label{sec:ellp-sampling}

We now switch to a streaming application of a different type,
$\ell_p$-sampling, where $p\in[1,2]$. We obtain the following theorem.

\begin{theorem}
\label{thm:ellp-sampling}
Fix $n\ge2$, $p\in[1,2]$, and $0<\eps<1/3$. There is a randomized linear
function $L:\R^n\to \R^{S}$, with $S=O(\eps^{-p}\log^3 n)$, and an
``$\ell_p$-sampling algorithm $A$'' satisfying the following. For any
non-zero $x\in \R^n$, there is a distribution $D_x$ on $[n]$ such that $D_x(i)$
is a $(n^{-2},1+\eps)$-approximator to $|x_i|^p/\|x\|_p^p$. Then $A$
generates a pair $(i, v)$ such that $i$ is drawn from $D_x$ (using the
randomness of the function $L$ only), and $v$ is a
$(0,1+\eps)$-approximator to $|x_i|^p$. 
\end{theorem}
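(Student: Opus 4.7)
The approach exploits an observation already implicit in Alg.~\ref{alg:ellp-sketch}: the PSL scaling $w_i^{1/p}$ implements the classical ``min-wise'' trick for $\ell_p$-sampling. Since $w\sim\W(k)$ is distributed like $\max_{j\in[k]}1/u_j$ for i.i.d.\ uniform $u_j$, one has $\Pr[w\ge T]=1-(1-1/T)^k= k/T\cdot(1+o(1))$ for $T\gg 1$. Hence the probability that the scaled value $|x_i|^p w_i$ exceeds a threshold $tR\approx t\|x\|_p^p$ is $(1\pm o(1))\cdot k|x_i|^p/(t\|x\|_p^p)$, which is proportional to the target $|x_i|^p/\|x\|_p^p$. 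Therefore, identifying any coordinate of the scaled vector $w^{1/p}\cdot x$ that is heavy (in the $\ell_p$ sense) and returning it yields approximately the right distribution.

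My algorithm maintains in parallel (i) a constant-factor estimate $R$ of $\|x\|_p^p$ using Theorem~\ref{thm:ellp} with $\eps=\Theta(1)$, which takes $O(\log^2 n)$ space; and (ii) $N=\Theta(\log n)$ independent copies of the sketch of Alg.~\ref{alg:ellp-sketch} with parameter $p$, $\W=\W(k)$ for $k=\Theta(1/\eps)$, $l=O(\log n)$ hash tables of size $m=\Theta(\eps^{-p}\log n)$ each, for a total of $O(\eps^{-p}\log^3 n)$ space. At query time, in each copy I run the analogue of Alg.~\ref{alg:ellp-reconstruction}, computing $\hat x_i$ as the median over hash tables and forming the heavy set $H=\{i:\hat x_i w_i\ge tR\}$ for $t=\Theta(k/\eps)$. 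If $H$ is nonempty in some copy, I output the smallest $i\in H$ from the first successful copy, together with $v=R\hat x_i$; otherwise I output an arbitrary dummy pair.

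The analysis follows the pattern of Theorem~\ref{thm:ellp} and Claim~\ref{clm:approximator_ellp} but replaces the PSL sum-reconstruction step by heavy-hitter selection. First, on the high-probability event $F_p=\sum_i|x_i|^p w_i\le O(\|x\|_p^p\cdot k\log n)$, Lemma~\ref{lem:randomSum1D} implies that every $\hat x_i$ is a $(1/w_i,1+\eps)$-approximator to $|x_i|^p/R$ with high probability, so $H$ coincides (up to a $1\pm\eps$ factor) with the idealized set $\{i:|x_i|^p w_i\ge tR\}$. Using the tail formula above, $\Pr[i\in H]=(1\pm O(\eps))\cdot k|x_i|^p/(tR)$, so $\E{|H|}\le O(k/t)=O(\eps)$ and, by pairwise independence of the $w_i$'s, $\Pr[|H|\ge 2]\le O(\eps^2)$. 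Thus the distribution of the returned index, conditional on $H\ne\emptyset$ in some copy, is within a $1+O(\eps)$ multiplicative factor of $|x_i|^p/\|x\|_p^p$. Running $N=\Theta(\log n)$ independent copies drives the probability that every copy has $H=\emptyset$ down to $n^{-10}$, which fits within the allowed $n^{-2}$ additive slack. Finally, on the successful event $\hat x_i w_i\ge tR\gg 1$, the additive $1/w_i$ error in $\hat x_i$ becomes a multiplicative $1+O(\eps)$ error in $v=R\hat x_i$, certifying the second part of the theorem.

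The main technical obstacle is tuning parameters to make simultaneously (a) the per-copy success probability $\Omega(1)$ (so only $O(\log n)$ copies suffice), (b) the within-copy uniqueness probability $\Pr[|H|\ge 2]=O(\eps)$ (for the $1+\eps$ multiplicative accuracy), and (c) the hashing noise bound from Lemma~\ref{lem:randomSum1D} to hold with $m=O(\eps^{-p}\log n)$; the balance $k=\Theta(1/\eps)$, $t=\Theta(1/\eps^2)$ turns out to work, exploiting the conditional moment bound $\E{w\mid M}=O(k\log n)$ of Lemma~\ref{lem:nonUniformSampling}. A secondary concern is bookkeeping the four failure modes --- miscalibrated $R$, bad hashing, $|H|\ge 2$ in the first successful copy, and total failure across all copies --- so that their combined error does not exceed the $(n^{-2},1+\eps)$-approximator slack.
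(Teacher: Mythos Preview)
Your high-level plan is close to the paper's, but there is a real gap in the parameter balance you describe: requirements (a) and (b) are incompatible with the choices $k=\Theta(1/\eps)$, $t=\Theta(1/\eps^2)$, $N=\Theta(\log n)$.  With those values, $\E{|H|}\approx k/t=\Theta(\eps)$, so the per-copy success probability $\Pr[H\neq\emptyset]$ is $\Theta(\eps)$, not $\Omega(1)$.  Consequently $(1-\Theta(\eps))^{\Theta(\log n)}$ is \emph{not} at most $n^{-2}$ unless $\eps=\Omega(1)$.  Moreover, you cannot push $\E{|H|}$ up to $\Theta(1)$ without destroying the $(1+\eps)$-accuracy: both the tail approximation $1-(1-1/T)^k=(k/T)(1\pm O(k/T))$ and the ``smallest $i\in H$'' rule incur multiplicative bias $O(\E{|H|})$, so $1+\eps$ accuracy forces $\E{|H|}=O(\eps)$.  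Fixing this by taking $N=\Theta(\eps^{-1}\log n)$ independent copies works, but it inflates the space to $O(\eps^{-1-p}\log^3 n)$, missing the stated bound by a factor $1/\eps$.

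The paper resolves exactly this tension by \emph{not} using independent copies.  It keeps a \emph{single} sketch built with $w_i=\max_{j\in[k]} w_{i,j}$ (so the hash-table estimates $\hat x_i$ are $(1/w_i,e^\eps)$-approximators once, for all $j$), and then iterates over the $k=\Theta(\eps^{-1}\log n)$ ``levels'' $j\in[k]$, testing $s_{i,j}=\mathbf{1}[\hat x_i\,w_{i,j}\ge t]$ with the individual $w_{i,j}=1/u_{i,j}$.  Each level has success probability $\Theta(\eps)$ and the correct $(1\pm O(\eps))$ conditional distribution (since $\Pr[u_{i,j}\le a_i/t]=a_i/t$ exactly), while all $k$ levels share the \emph{same} hash tables.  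Thus the $\Theta(\eps^{-1}\log n)$ repetitions needed for high-probability success are obtained \emph{for free} inside one sketch of size $m\cdot l=O(\eps^{-p}\log^2 n)\cdot O(\log n)$, yielding $S=O(\eps^{-p}\log^3 n)$.  This reuse of one heavy-hitter structure across all $k$ levels is precisely the idea your independent-copies scheme is missing.
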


In this setting, the sketch algorithm is essentially the
Algorithm~\ref{alg:ellp-sketch}, with the following minor modification.
We use $k=\zeta t\cdot \log n$ for a sufficiently high $\zeta>0$, and
choose $m=O(k\eps^{-p}\log n)=O(\eps^{-1-p}\log^2n)$ (note that the
choice of $\rho$ is irrelevant as it affects only parameter $k$, fixed
directly).  Furthermore, the algorithm is made to use limited
independence by choosing $w_i$'s as follows. Fix $k$ seeds for
pair-wise independent distribution. Use each seed to generate the list
$\{w_{i,j}\}_{j\in[n]}$, where each $w_{i,j}=1/u_{i,j}$ for random
$u_{i,j}\in U(0,1)$. Then $w_i=\max_{j\in[k]} w_{i,j}$ for each
$i\in[n]$. Note that each $w_i$ has distribution $\W=\W(k)$. This
method of generating $w_i$'s leads to an update time of $O(k+\log
n)=O(\eps^{-1}\log n)$.

Given the sketch, the sampling algorithm proceeds as described in
Alg.~\ref{alg:ellp-sampling} (using $w_{i,j}$'s defined above). We set
$r$ to be a $2$
approximation to $\|x\|_p^p$, which is easy to compute
separately (see, e.g., Theorem~\ref{thm:ellp}).  So, below we
just assume that $1/2\le \|x\|_p^p\le 1$ and $r=1$.

\begin{algorithm}[!h]
\caption{$\ell_p$-sampling algorithm. Input consists
  of $l$ hash tables $H_j$, precisions $w_{i,j}$ for $i\in[n], j\in[k]$, and a real
  $r>0$.}
\label{alg:ellp-sampling}
Compute 
$\hat x_i=\median_{j=1\ldots
  l}\left\{\left|\tfrac{H_j(h_j(i))\ /\ r}{w_i}\right|^p\right\}$,
where $w_i=\max_{j\in[k]} w_{i,j}$.
\\
We compute the following quantities $s_{i,j}\in\zo$ for $i\in[n]$ and
$j\in[k]$. 
For each $i\in[n], j\in[k]$, let $s_{i,j}=1$ if $\hat x_iw_{i,j}\ge
t\triangleq 4/\eps$ and 0 otherwise.
\\
Let $j^*$ be the smallest $j\in[k]$ such that there is exactly one
$i\in[n]$ with $s_{i,j^*}=1$.
\\
If such $j^*$ exists, 
return $(i^*,\hat x_{i^*}\cdot r^p/t)$ where $i^*$ is the unique $i^*$ with $s_{i^*,j^*}=1$.
\\
If no $j^*$ exists, return FAIL.
\end{algorithm}

\begin{proof}[Proof of Theorem~\ref{thm:ellp-sampling}]
Let $\omega=10\EE{w\in\W}{w\mid M}=O(k\log n)$, where event $M=M(w)$
satisfies $\Pr[M]\ge 1-\Omega(n^{-2})$. We choose the constant in
front of $m$ such that $m\ge \alpha\omega$ for
$\alpha=3^{2+p}\eps^{1-p}$.

Define $F_p=\sum_{i\in[n]} (x_iw_i^{1/p})^{p}$. Note that $\E{F_p\mid
  \cap_i M(w_i)}=\|x\|_p^p\cdot \omega/10$. Hence $F_p\le \omega$ with
probability at least $9/10-O(1/n)\ge 8/9$. By
Claim~\ref{clm:approximator_ellp}, we deduce that $\hat x_i$ is a
$(1/w_i,e^\eps)$-approximator to $|x_i|^p$, with high probability.

We now prove that the reconstruction algorithm samples an element $i$
with the desired distribution. We cannot apply PSL black-box anymore,
but we will reuse of the ingredients of PSL below. Let
$a_i=|x_i|^p\in[0,1]$, and $\hat a_i=\hat x_i$. Note that $\sum_i a_i\in[1/2,1]$.

The proof of correctness follows the outlines of the PSL proof. We
bound the probability that $s_{i,j}=1$, for $i\in[n], j\in[k]$ as follows:
$$
1/t\cdot a_i e^{-3\eps/2}
\le
\Pr[a_i\ge t/w_i+1/w_i]
\le
\Pr[s_{i,j}=1]
\le 
\Pr[a_i\ge t/w_i-1/w_i]
\le 1/t\cdot  a_ie^{3\eps/2}.
$$
Hence, for fixed $j$, $\sum_i t\cdot \Pr[s_{i,j}= 1]\le
1/t\cdot  e^{3\eps/2}\sum_i a_i\le
1/t\cdot  e^{3\eps/2}\le \eps/2$. Then, using pairwise
independence, for fixed $i,j$, we have that $s_{i,j}=1$ while all the
other $s_{i',j}=0$ for $i'\in[n]\setminus\{i\}$ with probability that satisfies
\begin{equation}
\label{eqn:PrIchosen}
1/t\cdot a_i e^{-3\eps/2}
\cdot
(1-\eps/2)
\le
\Pr[s_{i,j}=1 \wedge \sum_{i'\neq i}s_{i',j}=0]
\le 1/t\cdot a_i e^{3\eps/2}.
\end{equation}
Thus, $\sum_i s_{i,j}=1$ with probability at least
$\Omega(\eps)$. Furthermore, since the events for different $j\in[k]$
for $k=O(\eps^{-1}\log n)$ are
independent, the algorithm is guaranteed to not fail (i.e., reach step
5) with high probability.

It remains to prove that $i^*$ is chosen from some distribution $D_x$,
such that $D_x(i)$ is a $(O(n^{-2}),1+O(\eps))$-approximator to
$|x_i|^p/\|x_i\|_p^p$. Indeed, consider $j=j^*$, i.e., condition on the fact that
$\sum_{i} s_{i,j}=1$. Then,
$$
\Pr[i=i^*]=\frac{\Pr[s_{i^*,j}=1 \wedge \sum_{i'\neq i^*}s_{i',j}=0]}
{\sum_i \Pr[s_{i,j}=1 \wedge \sum_{i'\neq i}s_{i',j}=0]},
$$ which, by Eqn.~\ref{eqn:PrIchosen}, is a
$(n^{-2},e^{O(\eps)})$-approximator to $|x_i|^p/\|x\|_p^p$ (the
$O(n^{-2})$ terms comes from conditioning on event $M$). Also, note
that, since $s_{i^*,j}=1$, we have that $\hat{x}_i$ is a
$(0,e^{O(\eps)})$-approximator to $|x_i|^p$.

Scaling $\eps$ appropriately gives the
claimed conclusion.

The space bound is $S=O(m\log n)=O(\eps^{1-p}k\log^2
n)=O(\eps^{-p}\log^3 n)$.

\end{proof}

\section{Proofs of $p$-type inequalities}
\label{sec:pTypeProofs}

\begin{proof}[Proof of Lemma~\ref{lem:randomSum1D}]
Let us denote $g=(g_1,\ldots g_n)$ and $\chi=(\chi_1,\ldots \chi_n)$. 
Since $z^{p/2}$ is concave for $p\le 2$, 
a random variable $Z\ge 0$ satisfies $\E{Z^{p/2}} \le (\Enop{Z})^{p/2}$,
and thus 
\begin{eqnarray*}
\textstyle
\EE{g,\chi}{\Big|\sum_i g_i\chi_ix_i\Big|^p}
&=&
\textstyle
\EE{\chi}{\EE{g}{\left(\Big|\sum_i g_i\chi_ix_i\Big|^2\right)^{p/2}}}
\\
&\le&
\textstyle
\EE{\chi}{\left(\EE{g}{\Big|\sum_i g_i\chi_ix_i\Big|^2}\right)^{p/2}}.
\end{eqnarray*}
Now using (pairwise) independence of the sequence $g_1,\ldots,g_n$,
and the fact that $\|z\|_2\le \|z\|_p$, we conclude that
\begin{eqnarray*}
\textstyle
\EE{g,\chi}{\Big|\sum_i g_i\chi_ix_i\Big|^p}
&\le&
\textstyle
\EE{\chi}{\Big(\sum_i (\chi_ix_i)^2\Big)^{p/2}}
\\
&\le&
\textstyle
\EE{\chi}{\Big.\Big.\sum_i |\chi_ix_i|^p}
\\
&=&
\alpha\|x\|_p^p.
\end{eqnarray*}

We proceed to prove the lemma's second assertion.
Since $g$ and $\chi$ are independent,
for every fixed $\chi$ we have, by Markov's inequality, 
that with probability at least $8/9$ (over the choice of $g$),
$$
\Big|\sum_i g_i\chi_ix_i\Big|^2
\le 9\sum_i (\chi_ix_i)^2.
$$
Call vector $g$ satisfying the above ``good'' for the given vector $\chi$.
We henceforth restrict attention only to $g$ that is indeed good 
(for the relevant $\chi$, which is now a random variable), and we get
\begin{eqnarray*}
\textstyle
\EE{\chi_i}{\Big|\sum_i g_i\chi_ix_i\Big|^p}
&\le&
\textstyle
\EE{\chi_i}{\Big(9\sum_i (\chi_ix_i)^2\Big)^{p/2}}
\\
&\le&
3^p\EE{\chi_i}{\sum_i (\chi_ix_i)^p}
\\
&=&
3^p\alpha\|x\|_p^p,
\end{eqnarray*}
where the last inequality used again the fact that $\|z\|_2\le \|z\|_p$.
Now using Markov's inequality over the choice of $\chi$,
with probability at least $8/9$ we have 
$\left|\sum g_i\chi_ix_i\right|^p\le 3^{2+p}\alpha \|x\|_p^p$. 
The lemma now follows by recalling that $\chi$ and $g$ are independent
(or a union bound).
\end{proof}

\begin{proof}[Proof of Lemma~\ref{lem:pTypeHighD}]
For part~\ref{it:1pq2}, suppose that $0< p\le q\le 2$. We note that:
\begin{equation}
\label{eqn:normPQ}
\left\|\sum_i g_i\chi_ix_i\right\|_q^p
=
\left(\sum_{j} \left|\sum_i g_i\chi_ix_{ij}\right|^q\right)^{p/q}.
\end{equation}
We want to bound $\sigma(\chi,g)=\sum_{j} \left|\sum_i g_i\chi_ix_{ij}\right|^q$, for
fixed vector $\chi$ and random vector $g$. For fixed $j$, we have that, using
concavity of $x^{q/2}$, pairwise-independence, and norm-inequality respectively:
$$
\EE{g}{\left|\sum_i g_i\chi_ix_{ij}\right|^q}
\le
\left|\EE{g}{\left(\sum_i g_i\chi_ix_{ij}\right)^2}\right|^{q/2}
=
\left|\sum_i (\chi_ix_{ij})^2\right|^{q/2}
\le
\sum_i \chi_i|x_{ij}|^q.
$$
By linearity of expectation, $\EE{g}{\sigma(\chi,g)}\le \sum_i\chi_i \sum_j |x_{ij}|^q$. By
Markov's bound, we have that $\sigma(\chi,g)\le 9
\sum_i\|\chi_ix_i\|_q^q$, with probability at least
$8/9$ (over the choice of $g$). Call such $g$ good. Plugging this into
Eqn.~\eqref{eqn:normPQ}, since $p$-norm upper bounds $q$-norm, we have that:
$$
\left\|\sum_i g_i\chi_ix_i\right\|_q^p\le 9\cdot \sum_i \|\chi_ix_i\|_q^p.
$$
Conditioned on good $g$, by taking the expectation over $\chi_i$'s and
using Markov's bound, we obtain that
$$
\left\|\sum_i g_i\chi_ix_i\right\|_q^p\le 
\tfrac{9}{\alpha}\cdot 9\|x\|_{p,q}^p
$$
with probability at least $8/9$ over the choice of $\chi$.
Hence, $\left\|\sum_i g_i\chi_ix_i\right\|_q^p\le 1$ as long as
$\alpha=9^{2}\|x\|_{p,q}^p\le 9^{2}\omega$, with probability at
least $7/9$ over the choice of $g$ and $\chi$.

For part~\ref{it:pq2}, suppose that $q\ge 2$. As before, since
$$
\left\|\sum_i g_i\chi_ix_i\right\|_q^2
=
\left(\sum_{j} \left|\sum_i g_i\chi_ix_{ij}\right|^q\right)^{2/q},
$$
we want to bound $\sigma(\chi,g)=\sum_j \sigma_j(\chi,g)$,
where $\sigma_j(\chi,g)=\left|\sum_i \chi_i g_i x_{ij}\right|^q$. For
fixed $\chi$, we
compute the expectation $\EE{g}{\sigma_j(\chi,g)}$. For this we compute
the moment $\kappa=2\lceil q/2\rceil$ of $|\sum_i
g_i\chi_ix_{ij}|$. For convenience, define $y_i=\chi_ix_{i,j}$. We have that
$$
M_{\kappa}
\triangleq
\EE{g}{\left(\sum_i g_iy_i\right)^{\kappa}}
\le
\kappa!\cdot \left(\sum_i y_i^2\right)^{\kappa/2}.
$$
Hence, by concavity of $f(z)=z^{q/\kappa}$, we have
$$
\EE{g}{\sigma_j(\chi,g)}\le \left(M_\kappa\right)^{q/\kappa}\le (\kappa!)^{q/\kappa}\cdot
\left(\sum_i y_i^2\right)^{q/2}=q^{O(q)}\left\|\sum_i\chi_ix_{ij}\right\|_2^{q/2}.
$$
Thus, we have that $\sigma(\chi,g)\le 9q^{O(q)}{}\sum_j\|\sum_i\chi_i x_{ij}\|_2^{q/2}$ with
probability at least $8/9$. Again call such $g$'s good. For such a
good $g$, we now have that, by triangle inequality (in norm $q/2$):
$$
\left\|\sum_i g_i\chi_ix_i\right\|_q^2
\le 
9q^{O(1)}\cdot \left(\sum_j \left(\sum_i (\chi_i
    x_{ij})^2\right)^{q/2}\right)^{2/q}
\le
9q^{O(1)}\cdot \sum_i \|\chi_i x_{i}\|_q^{2}.
$$
Conditioned on good $g$, again by taking expectation over $\chi$ and
using Markov's bound, we obtain that, with probability at least $8/9$,
$$
\left\|\sum_i g_i\chi_ix_i\right\|_q^2
\le 9^2\cdot q^{O(1)}\cdot \tfrac{1}{\alpha} \cdot\sum_i\|x_i\|_q^2.
$$

Finally, we distinguish the cases where $p\ge 2$ and where $p\in(0,1)$. If $p\ge
2$, then using that $\sum_i \|x_i\|_q^2\le n^{1-2/p}\cdot
\|x\|_{p,q}^{2}\le n^{1-2/p}\omega^{2/p}$, we conclude that, with probability at least $7/9$
over $g,\chi$, we have that
$
\left\|\sum_i g_i\chi_ix_i\right\|_q^2\le 1
$
as long as $\alpha\ge 9^2q^{O(1)}n^{1-2/p}\omega^{2/p}$. Similarly, if
$p\in(0,2)$, then 
$\sum_i \|x_i\|_q^2\le 
\|x\|_{p,q}^{2}\le\omega^{2/p}$, and we conclude that, with probability at least $7/9$
over $g,\chi$, we have that
$
\left\|\sum_i g_i\chi_ix_i\right\|_q^2\le 1.
$
We note that we just used $\kappa$-wise independence, where $\kappa\le q+2$.

We now prove part~\ref{it:Xp}, which just follows from a triangle
inequality. Namely, we observe that
$$
\left\|\sum_i\chi_i g_ix_i\right\|_X
\le 
\sum_i\chi_i\left\|x_i\right\|_X.
$$
Hence, taking expectation and applying Markov's bound, we obtain, with
probability at least $8/9$, the following. If $p\ge 1$, then
$$
\left\|\sum_i\chi_i g_ix_i\right\|_X
\le 
\tfrac{9}{\alpha}\left\|x\right\|_{1,X}
\le 
\tfrac{9n^{1-1/p}}{\alpha}\left\|x\right\|_{p,X}
\le 
\tfrac{9n^{1-1/p}}{\alpha}\omega^{1/p},
$$
and taking $\alpha\ge 9n^{1-1/p}\omega^{1/p}$ is then enough. If
$p\in(0,1)$, then
$$
\left\|\sum_i\chi_i g_ix_i\right\|_X
\le 
\tfrac{9}{\alpha}\left\|x\right\|_{1,X}
\le 
\tfrac{9}{\alpha}\omega^{1/p},
$$
and taking $\alpha\ge 9\omega^{1/p}$ is enough.
\end{proof}

\section{A lower bound on the total precision}
\label{sec:lowerBound}

We now deduce a lower bound on $\sum_i \EE{}{w_i}$, and show it is
close to the upper bound that we obtain in
Lemma~\ref{lem:nonUniformSampling}.

\begin{theorem}
\label{thm:lowerBound}
Consider the same setting as in the Precision Sampling Lemma (Lemma~\ref{lem:nonUniformSampling}).
Let $\{a\}$ be a sequence of numbers in $[0,1]$. Let $\{w_i\}_{i \in [n]}$ be a sequence generated by a random process, independent of the sequence $\{a_i\}$. Let $R$ be an algorithm with the following properties.
The algorithm obtains both $\{w_i\}$ and a sequence $\{\hat  a_i\}_{i \in [n]}$, where each $\hat a_i$ is an arbitrary $(1/w_i,1)$-approximator to $a_i$. The algorithm outputs a value $\hat \sigma$ that is a $(\rho,e^\eps)$-approximator to $\sigma \eqdef \sum_{i \in [n]} a_i$ with probability at least $2/3$.

Let $\alpha = \max\{\rho/\eps,(6\eps)^{-4}\}$.
If $\eps \in (0,1/48)$, and $\alpha \le n/16$
then there exists an absolute positive constant $C$ such that
$\frac{1}{n}\sum_{i \in [n]} \mathbb E[w_i] \ge \frac{C}{\eps\rho} \cdot \log{(n/\alpha)}$.
\end{theorem}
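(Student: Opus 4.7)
The plan is to apply Yao's minimax principle with a family of Bernoulli-type hard tests indexed by a scale $k$. Fix $\sigma^\star := \rho/\eps$, the sum at which additive error $\rho$ equals multiplicative error $\eps \sigma^\star$. For each integer $k$ in the range $[k_0, K]$ with $k_0 := \lceil \log_2(1/(\rho\eps))\rceil$ and $K := \lfloor \log_2(n\eps/\rho)\rfloor$, I define two distributions $D_k^0, D_k^1$ on $[0,1]^n$: under $D_k^b$, each coordinate is independently set to $c_k := 2^{-k}$ with probability $p_k^b := (\rho\, 2^k / n\eps)(1+3\eps b)$ and to $0$ otherwise. Then $\mathbb{E}[\sigma]$ equals $\rho/\eps$ under $D_k^0$ and $\rho/\eps + 3\rho$ under $D_k^1$, so any $(\rho, e^\eps)$-approximator must distinguish these two cases with probability $\ge 2/3$. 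The hypotheses $\alpha \ge (6\eps)^{-4}$, $\alpha \le n/16$, and $\eps \le 1/48$ imply $K - k_0 = \Omega(\log(n/\alpha))$, which will be the source of the logarithmic factor.

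For each scale $k$ I use the following per-scale adversary: having seen the algorithm's input-independent random vector $w = (w_i)_i$, return $\hat a_i := c_k/2$ whenever $w_i \le 2/c_k = 2^{k+1}$, and $\hat a_i := a_i$ otherwise. The first branch is a valid $(1/w_i, 1)$-approximator for both $a_i = 0$ and $a_i = c_k$ (since $c_k/2 \le 1/w_i$), so the algorithm's view of $\hat a$ depends on the hidden bit $b$ only through $(a_i)_{i \in I_k}$, where $I_k := \{i : w_i > 2^{k+1}\}$. These values form an i.i.d.\ $\mathrm{Bernoulli}(p_k^b)$ sample (scaled by $c_k$), so by data processing any distinguisher is a function of the Bernoulli count. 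A standard Chernoff/$\chi^2$ calculation shows that $\mathrm{Bin}(|I_k|, p_k^0)$ and $\mathrm{Bin}(|I_k|, p_k^1)$ have total variation $\ge 1/3$ only when $|I_k|\, p_k = \Omega(1/\eps^2)$. Combining with the success guarantee (and the tail identity $\E{X} \ge t\cdot\Pr[X \ge t]$) yields $\E{|I_k|} = \Omega(1/(p_k \eps^2)) = \Omega(n/(\rho\eps \cdot 2^k))$.

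Finally, the layer-cake identity
\[
\textstyle\sum_i \E{w_i}
 = \int_0^\infty \E{|\{i : w_i > t\}|}\,dt
 \;\ge\; \sum_{k=k_0+1}^{K} (2^{k+1} - 2^k)\,\E{|I_k|}
\]
makes each scale contribute $2^k\cdot \Omega(n/(\rho\eps \cdot 2^k)) = \Omega(n/(\rho\eps))$, independent of $k$, so summing over the $\Omega(\log(n/\alpha))$ admissible scales gives $\sum_i \E{w_i} = \Omega(n\log(n/\alpha)/(\rho\eps))$, matching the target after dividing by $n$. The main technical subtlety is that the adversary is designed per scale yet the same random $w$ must support every scale simultaneously in the layer-cake step; this is legitimate because the theorem's hypothesis supplies a success guarantee for every fixed input distribution, so the same marginal law of $w$ yields the per-scale bound on $\E{|I_k|}$ at every admissible $k$. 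A secondary bookkeeping point is that $K - k_0$ must indeed achieve $\Omega(\log(n/\alpha))$: in the regime $\alpha = \rho/\eps$ this is immediate from $K = \log(n/\alpha)$, and in the regime $\alpha = (6\eps)^{-4}$ the assumption $\eps \le 1/48$ ensures $\log_2(n\eps^2) \ge \log_2(n/\alpha) - O(1)$, which is exactly the number of levels available.
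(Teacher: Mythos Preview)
Your proposal is correct and follows essentially the same approach as the paper: a multi-scale family of hard instances, an adversary that hides $a_i$ whenever $w_i$ falls below the scale's threshold, a per-scale sample-complexity lower bound forcing $\E{|I_k|}=\Omega(n/(\rho\eps\cdot 2^k))$, and a layer-cake sum over $\Theta(\log(n/\alpha))$ scales. The only differences are cosmetic---the paper parametrizes each scale by a uniformly random subset of fixed size $2^j$ carrying an arbitrary Boolean function $f$ and invokes a mean-estimation query lower bound from~\cite{CEG95}, whereas you use i.i.d.\ Bernoulli coordinates and a binomial total-variation bound; one side effect is that in your setup $\sigma$ is itself random, so you need a concentration step (which your choice of $k_0$ is meant to supply, though as written the gap $3\rho$ and the standard deviation at $k=k_0$ are both exactly on the margin and should be loosened by constant factors) that the paper's deterministic-$\sigma$ construction sidesteps.
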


Note that our lower bound is essentially off by a factor of $\eps$
from PSL. 

We now prove the theorem.
We start by adapting the lemma that shows that the Hoeffding bound is
nearly optimal.

\begin{lemma}[Based on Theorem~1 of \cite{CEG95}]\label{lem:lb_worst_query}
Let $\eps \in (0,1/8)$. Let $f$ be a function from $[n]$ to $\zo$. 
Let $t$ be a positive integer such that $t \le \sqrt{n/3} -1$.
Let $\mathcal A$ be a randomized algorithm 
that always queries the value of $f$ on at most $t$ different inputs, and outputs an estimate $\bar\sigma$ to $\sigma \eqdef \frac{1}{n} \cdot \sum_{x \in [n]} f(x)$.

If $|\bar\sigma - \sigma| < \eps$ with probability at least $7/12$, then $t \ge C/\eps^2$, where $C$ is a fixed positive constant.
\end{lemma}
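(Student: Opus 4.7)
The plan is to apply Yao's minimax principle and reduce the problem to distinguishing two explicit input distributions by a deterministic algorithm. Let $\mathcal{D}_+$ be the uniform distribution over functions $f:[n]\to\{0,1\}$ with exactly $(\tfrac12+\eps)n$ ones, and let $\mathcal{D}_-$ be the analogous distribution with $(\tfrac12-\eps)n$ ones; then $\sigma$ equals $\tfrac12+\eps$ under $\mathcal{D}_+$ and $\tfrac12-\eps$ under $\mathcal{D}_-$. Any algorithm producing an additive $\eps$-estimate with success probability at least $7/12$ outputs a value $\ge \tfrac12$ with probability $\ge 7/12$ under $\mathcal{D}_+$ and a value $< \tfrac12$ with probability $\ge 7/12$ under $\mathcal{D}_-$. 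So the thresholded decision ``$\bar\sigma > \tfrac12$'' has advantage at least $7/12 - 5/12 = 1/6$, and by Yao's principle, some deterministic $t$-query algorithm achieves the same advantage against a mixture of $\mathcal{D}_+$ and $\mathcal{D}_-$.

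The next step is to upper bound the total variation distance between the transcript (sequence of query/answer pairs) produced by any deterministic adaptive $t$-query algorithm under $\mathcal{D}_+$ versus $\mathcal{D}_-$. The assumption $t \le \sqrt{n/3}-1$ ensures sampling is nearly without-replacement-free: conditioned on the past transcript, the probability that the next query returns $1$ is $(\tfrac12\pm\eps) + \Delta_j$ for some correction $|\Delta_j|\le (j-1)/(n-j+1)$, and these corrections are negligible compared to $\eps$ in the relevant regime. By the chain rule for KL divergence, the KL divergence between the two transcript distributions is
\[
  \mathrm{KL} \;\le\; \sum_{j=1}^{t} O\!\left(\eps^2 + \Delta_j^2\right)
  \;\le\; O(t\eps^2) + O(t^3/n^2) \;\le\; O(t\eps^2),
\]
using $t^2\le n/3$ to absorb the second term. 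Pinsker's inequality then gives transcript TV distance at most $O(\eps\sqrt{t})$, and by data processing this bounds the distinguishing advantage of any test.

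Combining the two bounds, $1/6 \le O(\eps\sqrt{t})$, which yields $t \ge C/\eps^2$ for an absolute constant $C$.

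The main technical subtlety is handling adaptivity together with sampling without replacement; the cleanest route is the KL chain rule above, since it naturally accommodates adaptive strategies (each conditional law depends only on the past transcript, which is the same random variable under both distributions). An alternative, if the without-replacement bookkeeping is awkward, is to replace $\mathcal{D}_\pm$ by product distributions of i.i.d.\ Bernoulli$(\tfrac12\pm\eps)$ variables and observe that the empirical mean concentrates within $\eps/2$ of its expectation with high probability, so estimating $\sigma$ within $\eps$ still solves the product-distribution distinguishing problem, which is a textbook $\Omega(1/\eps^2)$ lower bound.
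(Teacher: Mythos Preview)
Your proof is correct and reaches the same conclusion, but by a genuinely different route than the paper. The paper follows \cite{CEG95} directly: it writes down the hypergeometric probability that a uniformly random $f$ with $\lceil n(1/2+\eps)\rceil$ ones shows at most $\lceil t/2\rceil-1$ ones in the $t$-sample, then lower-bounds each term via Stirling's approximation and elementary inequalities to obtain $\delta \ge C_2\exp(-O(\eps^2 t + \eps\sqrt{t}))$, from which $t=\Omega(1/\eps^2)$ follows (with an amplification step to handle the relatively large failure probability $5/12$). Your approach replaces this explicit combinatorics with the information-theoretic toolkit of Yao's principle, the KL chain rule for adaptive transcripts, and Pinsker's inequality. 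This is cleaner conceptually and handles adaptivity automatically, whereas the paper's argument gives sharper explicit constants and, as a byproduct, the stronger $t=\Omega(\eps^{-2}\log(1/\delta))$ dependence for small $\delta$. One minor slip: your per-step KL decomposition ``$O(\eps^2+\Delta_j^2)$'' is not quite the right bookkeeping, since the conditional probabilities under $\mathcal D_\pm$ differ by exactly $2\eps n/(n-j+1)$, so the correction enters multiplicatively rather than additively; but since $t\le\sqrt{n/3}$ keeps this factor within $[1,2]$, the per-step KL is simply $O(\eps^2)$ and your final bound $\mathrm{KL}=O(t\eps^2)$ stands. Your i.i.d.\ Bernoulli alternative also works and is arguably the shortest path.
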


\begin{proof}
Let $\delta$ be a bound on the probability that the algorithm returns an incorrect estimate.
In the proof of Theorem~1 in \cite{CEG95}, it is shown that 
$$\delta \ge \sum_{i=0}^{\lceil t/2 \rceil - 1} \binom{t}{i} \cdot \frac{\binom{n - t}{\lceil n(1/2+\eps)\rceil - i}}{\binom{n}{\lceil n(1/2+\eps)\rceil}}.$$
For each $i \in \{0,\ldots,\lceil t/2 \rceil - 1\}$, we have 
\begin{eqnarray*}
 \frac{\binom{n - t}{\lceil n(1/2+\eps)\rceil - i}}{\binom{n}{\lceil n(1/2+\eps)\rceil}}
&=& \frac{(n-t)!}{n!} \cdot \frac{\lceil n(1/2+\eps)\rceil!}{(\lceil n(1/2+\eps)\rceil - i)!}
\cdot \frac{\lfloor n(1/2-\eps)\rfloor!}{(\lfloor n(1/2-\eps)\rfloor - t + i)!}\\
&\ge & n^{-t} \cdot (n(1/2 + \eps - i/n))^i \cdot (n(1/2 - \eps - (t-i+1)/n))^{t-i}\\
&=&  2^{-t} \cdot (1 + 2 \eps - i/n)^i \cdot (1 - 2 \eps - (t-i+1)/n)^{t-i}\\
&\ge& 2^{-t}\cdot (1 + 2 \eps - (t+1)/n)^i \cdot (1 - 2 \eps - (t+1)/n)^{t-i}.
\end{eqnarray*}
Since $\eps < 1/8$, $1-2\eps > 3/4$. Since $t \le \sqrt{n/3}-1$, we have $(t+1)^2 \le n/3$ and therefore,
$(t+1)/n \le 1/(3(t+1)) \le 1/(3t)$. We have $((t+1)/n)/(1-2\eps) \le 4/(9t)$.  This implies both
$$(1-2\eps - (t+1)/n) \ge (1-2\eps) \cdot (1-4/(9t)),$$
and  
$$(1+2\eps - (t+1)/n) \ge (1+2\eps) \cdot (1-4/(9t)).$$
We obtain
$$\frac{\binom{n - t}{\lceil n(1/2+\eps)\rceil - i}}{\binom{n}{\lceil n(1/2+\eps)\rceil}}
\ge 2^{-t}\cdot (1 + 2 \eps)^i \cdot (1 - 2 \eps)^{t-i} \cdot (1-4/(9t))^t.
$$
One can show that for $\delta \in [0,1/2]$, $1-\delta \ge e^{-2\delta}$. Hence
$$(1-4/(9t))^t \ge e^{-2\cdot\frac{4}{9t} \cdot t} \ge 1/4,$$
and therefore,
$$\frac{\binom{n - t}{\lceil n(1/2+\eps)\rceil - i}}{\binom{n}{\lceil n(1/2+\eps)\rceil}}
\ge 2^{-t-2}\cdot (1 + 2 \eps)^i \cdot (1 - 2 \eps)^{t-i}.$$
We plug this bound into the inequality from \cite{CEG95} and obtain
\begin{eqnarray*}
\delta &\ge& 2^{-t-2} \sum_{i=0}^{\lceil t/2 \rceil - 1} \binom{t}{i} \cdot (1+2\eps)^i \cdot (1-2\eps)^{t-i}\\
&\ge&  2^{-t-2}
\cdot (1+2\eps)^{\lceil t/2 \rceil - \lceil \sqrt{t/2} \rceil} \cdot (1-2\eps)^{\lceil t/2 \rceil + \lceil \sqrt{t/2} \rceil} \cdot
\sum_{i=\lceil t/2 \rceil - \lceil \sqrt{t/2} \rceil}^{\lceil t/2 \rceil - 1} \binom{t}{i}\\
&\ge&  2^{-t-2}
\cdot (1-4\eps^2)^{\lceil t/2 \rceil - \lceil \sqrt{t/2} \rceil} \cdot (1-2\eps)^{2\lceil \sqrt{t/2} \rceil}
\cdot {\lceil \sqrt{t/2} \rceil} \cdot \binom{t}{\lceil t/2 \rceil - \lceil \sqrt{t/2} \rceil}\\
&\ge& 
4 \cdot e^{-8\eps^2(\lceil t/2 \rceil - \lceil \sqrt{t/2} \rceil)} \cdot
e^{-8\eps\lceil \sqrt{t/2} \rceil}
\cdot \frac{\lceil \sqrt{t/2} \rceil}{2^t} \cdot \binom{t}{\lceil t/2 \rceil - \lceil \sqrt{t/2} \rceil}.\\
\end{eqnarray*}
Using Stirling's approximation $\sqrt{2\pi} k^{k+1/2} e^{-k+1/(12k+1)}  < k! < \sqrt{2\pi} k^{k+1/2} e^{-k+1/(12k)}$, one can show that there is a positive constant $C_1$
such that 
$$\binom{t}{\lceil t/2 \rceil - \lceil \sqrt{t/2} \rceil} \ge C_1 \cdot 2^t / \sqrt{t}.$$
Plugging this into the previous inequality, we obtain for some positive constant $C_2$,
$$\delta \ge C_2 \cdot \exp\left(-8\eps^2(\lceil t/2 \rceil - \lceil \sqrt{t/2} \rceil) 
-8\eps\lceil \sqrt{t/2} \rceil\right).$$
This shows that for very small $\delta$ (namely, for $\delta < C_2/C_3$, where $C_3$ is a sufficiently large constant), $t > C_4 \cdot \frac{1}{\eps^2} \cdot \log (1/\delta)$, where $C_4$ is a positive constant. 

Note that even if $\delta$ is a relatively large constant less than $1/2$ ($5/12$ in our case),
$t > C_5 \cdot \frac{1}{\eps^2}$, for some positive $C_5$. This is the case, because if we had a better dependence on $\eps$ in this case, we could obtain a better dependence on $\eps$ also for small $\delta$ by routinely amplifying the probability of success of the algorithm, which incurs an additional multiplicative factor of only $O(\log(1/\delta))$. This finishes the proof.
\end{proof}

The above lemma shows a lower bound on the maximum number of queries. In the following corollary we extend the bound to the expected number of queries.

\begin{corollary}[Based on Corollary~2 of \cite{CEG95}]\label{cor:lb_avg_query}
Let $\eps \in (0,1/8)$, and let $n > 1/\eps^4$. Let $f$ be a function from $[n]$ to $\zo$. 
Let $\mathcal A$ be a randomized algorithm 
that  outputs an estimate $\bar\sigma$ to $\sigma \eqdef \frac{1}{n} \cdot \sum_{x \in [n]} f(x)$.

If $|\bar\sigma - \sigma| < \eps$ with probability at least $2/3$, then 
the expected number of queries of $\mathcal A$
to $f$ is at least $C/\eps^2$ for some function $f$, where $C$ is an absolute positive constant.
\end{corollary}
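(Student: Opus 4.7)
The plan is to convert any algorithm with small expected query complexity into one with a worst-case bound on the number of queries, and then invoke Lemma~\ref{lem:lb_worst_query}. Concretely, suppose $\mathcal A$ estimates $\sigma$ to within $\eps$ with probability at least $2/3$ and uses at most $E$ queries in expectation on every input. I will construct a ``truncated'' algorithm $\mathcal A'$ that aborts after $12E$ queries (returning an arbitrary value in that case). By Markov's inequality, on any fixed $f$ the probability that $\mathcal A$ attempts more than $12E$ queries is at most $1/12$, so by a union bound $\mathcal A'$ still succeeds with probability at least $2/3 - 1/12 = 7/12$.

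Next I would apply Lemma~\ref{lem:lb_worst_query} to $\mathcal A'$, which has worst-case query complexity $t \le 12E$, provided the hypothesis $t \le \sqrt{n/3}-1$ of that lemma is met. Under this condition, the lemma yields $12 E \ge C'/\eps^2$ for some absolute constant $C'$, hence $E \ge C/\eps^2$ with $C = C'/12$, which is what we want.

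It remains to dispose of the case in which the hypothesis $12E \le \sqrt{n/3}-1$ fails. If $12E > \sqrt{n/3}-1$, then $E > (\sqrt{n/3}-1)/12$. Using $n > 1/\eps^4$ (the assumption in the corollary) together with $\eps < 1/8$, one checks that $(\sqrt{n/3}-1)/12 \ge C/\eps^2$ for a suitable absolute constant $C$, so the desired bound on $E$ holds trivially in this regime. Choosing $C$ to be the smaller of the two constants obtained in the two cases completes the proof.

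The only real subtlety is making sure the Markov truncation degrades the success probability by a small enough amount to still fall inside the $7/12$ regime allowed by Lemma~\ref{lem:lb_worst_query}; the factor $12$ in the threshold is tuned precisely for this. No further probabilistic work is needed, since the ``hard function'' guaranteed by Lemma~\ref{lem:lb_worst_query} serves equally well here, and the expectation bound on queries is obtained on \emph{that} particular function.
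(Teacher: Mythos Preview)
Your proof is correct and follows essentially the same approach as the paper: define the worst-case expected query complexity, truncate $\mathcal A$ at $12$ times that quantity using Markov's inequality to lose at most $1/12$ in the success probability, and then invoke Lemma~\ref{lem:lb_worst_query}. You are in fact slightly more careful than the paper, which silently assumes the hypothesis $t \le \sqrt{n/3}-1$ of Lemma~\ref{lem:lb_worst_query}, whereas you explicitly dispatch the complementary case using $n > 1/\eps^4$.
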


\begin{proof}
Let $t$ be the maximum expected number of queries of $\mathcal A$ to $f$, where the maximum is taken over all functions $f:[n]\to\zo$. Consider an algorithm $\mathcal A'$ that does the following. It simulates $\mathcal A$ until $\mathcal A$ attempts to make a $(\lfloor 12t \rfloor + 1)$-th query.
In this case $\mathcal A'$ interrupts the execution of $\mathcal A$, and outputs 0.
Otherwise $\mathcal A'$ returns the output of $\mathcal A$.
The probability that $\mathcal A'$ returns an incorrect answer is bounded by $1/3 + 1/12 = 5/12$.
By Lemma~\ref{lem:lb_worst_query}, $\mathcal A'$ makes at least
$C_1 / \eps^2$ queries, where $C_1$ is a positive constant.
Hence
$12t > C_1 / \eps^2$, which proves the claim.
\end{proof}

Finally we show a bound on the expectation of $\sum_i \mathbb E[w_i]$. The bound uses the fact that $w_i$'s have to be distributed in such a way that we are able to both observe many small $a_i$'s and few large $a_i$'s. Intuitively, there are roughly $\Theta(\log n)$ different possible magnitudes of $a_i$'s, and $w_i$'s of different size must be used to efficiently observe a sufficiently large number of $a_i$'s of each magnitude. This yields an additional logarithmic factor in the lower bound.

\begin{proof}[Proof of Theorem~\ref{thm:lowerBound}]
Consider the case of $\sigma$ between $0$ and $\rho/\eps$. If $\hat \sigma$ is a $(\rho,e^\eps)$ estimator for $\sigma$, then 
$$\sigma \cdot e^{-\eps} - \rho \le \hat\sigma \le \sigma \cdot e^{\eps} + \rho,$$
$$\sigma \cdot (1-\eps) - \rho < \hat\sigma < \sigma \cdot (1+2\eps) + \rho,$$
$$\sigma- 2\rho < \hat\sigma < \sigma + 3\rho,$$
$$|\sigma - \hat\sigma| < 3\rho.$$
Therefore, the estimator is also an additive approximation for $\sigma$.

Consider an integer $j$ such that $(\rho/\eps) \le 2^j$ and $(6\eps)^{-4} <2^j \le n$. We create a sequence $\{a_i\}$ as follows. Let $f$ be a function from $[2^j] \to \zo$. We select a subset $\mathcal I \subseteq [n]$ of size $2^j$ uniformly at random. For each $i \not\in \mathcal I$, we set $a_i = 0$. For $i \in \mathcal I$, we set $a_i = (1+f(k))/2 \cdot (\rho/\eps)/2^j$, where $k$ is the rank of $i$ in $\mathcal I$. We have $\sigma = \frac{\rho}{2\eps}(1+2^{-j}\sum_{x \in [2^j]} f(x))$.
Therefore, $R$ has to compute an additive $3\rho / (\rho/(2\eps)) = 6\eps$ approximation to $2^{-j}\sum_{x \in [2^j]} f(x)$ with probability at least $2/3$, where the probability is taken over the random bits of $R$ and the random choice of $\{w_i\}$.

We now create a corresponding sequence $\{\hat a_i\}$. For $i \not \in \mathcal I$, we set $\hat a_i = 0$. For $i \in \mathcal I$, if $1/w_i < \frac{\rho}{2^{j+1}\eps}$, we set $\hat a_i = a_i$, and $\hat a_i = \frac{3\rho}{4\eps}$, otherwise. Effectively, $R$ can only see the values $f(k)$ for $k$ such that $1/w_i < \frac{\rho}{2^{j+1}\eps}$, where $i$ is the item of rank $k$ in $\mathcal I$.
Let $E_j$ be the expected number of indexes $i$ for which $w_i > 2^{j+1} \frac{\eps}{\rho}$. The expected number of values of $f$ that $R$ can see is then $\frac{2^j}{n} \cdot E_j$.
By Corollary~\ref{cor:lb_avg_query},
$$\frac{2^j}{n} \cdot E_j \ge C_1/(6\eps)^2$$
where $C_1$ is an absolute positive constant.
Therefore, 
$$ E_j \ge \frac{C_2 n}{2^j \eps^2}$$
for another absolute constant $C_2$.

Consider now the expectation of the sum of all $w_i$'s:
\begin{eqnarray*}
\sum_{i\in[n]} \EE{}{w_i} &\ge& \sum_{j \in \mathbb Z}\frac{2^{j+1}\eps}{\rho} \cdot\mathbb E\left[\#i:w_i\in\left(\frac{2^{j+1}\eps}{\rho},\frac{2^{j+2}\eps}{\rho}\right]\right]\\
&\ge& \sum_{j \in \mathbb Z}\frac{2^{j}\eps}{\rho} \cdot\mathbb E\left[\#i:w_i>\frac{2^{j+1}\eps}{\rho}\right]\\
&\ge& \sum_{j: \max\{\rho/\eps,(6\eps)^{-4}\} < 2^j \le n}\frac{2^{j}\eps}{\rho} \cdot E_j\\
&\ge& \sum_{j: \max\{\rho/\eps,(6\eps)^{-4}\} < 2^j \le n}\frac{2^{j}\eps}{\rho} \cdot \frac{C_2 n}{2^j \eps^2}\\
&\ge& \frac{C_2 n}{\rho \eps} \cdot \left(\lfloor\log n\rfloor - \lfloor\max\{\rho/\eps,(6\eps)^{-4}\} + 1\rfloor\right) \ge \frac{C_3 n}{\rho \eps} \log (n/\alpha),
\end{eqnarray*}
where $C_3$ is a fixed positive constant. This finishes the proof.

\end{proof}

\section*{Acknowledgments}

We would like to thank Piotr Indyk, Assaf Naor, and David Woodruff for
helpful discussions about the $F_k$-moment estimation problem. We also
thank Andrew McGregor for kindly giving an overview of the landscape of the
heavy hitters problem.

\bibliographystyle{alphainit}
\bibliography{bibfile} 

\appendix

\section{Bound on $\EE{w\sim \W}{w^\alpha}$}
\label{apx:expAlpha}

\begin{claim}
For $k\ge 1$, suppose $u_j$ are drawn uniformly at random from $[0,1]$. Then, for any $\alpha\in(0,1)$, we have that
$$ \EE{u_j}{\left(\max_j 1/u_j\right)^\alpha} \le
O\left(\tfrac{k^{\alpha}}{1-\alpha}\right).
$$
\end{claim}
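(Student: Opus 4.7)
Let $W \eqdef \max_j 1/u_j$, equivalently $W = 1/U$ where $U \eqdef \min_j u_j$. The plan is to bound the tail of $W$ by a simple union bound and then integrate.

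For any $t \ge 1$, a union bound gives
\[
  \Pr[W > t] \;=\; \Pr[U < 1/t] \;\le\; k/t,
\]
and trivially $\Pr[W > t] \le 1$ always. Since $W \ge 1$, I use the layer-cake formula
\[
  \EE{}{W^\alpha} \;=\; \int_0^\infty \Pr[W^\alpha > s] \, ds
  \;=\; \int_0^\infty \Pr[W > s^{1/\alpha}] \, ds.
\]
For $s \le 1$ the integrand is $1$, and for $s > 1$ the tail bound above gives $\Pr[W > s^{1/\alpha}] \le \min\{1, k\, s^{-1/\alpha}\}$. The natural threshold where $k \, s^{-1/\alpha} = 1$ is $s = k^\alpha$, so I split there:
\[
  \EE{}{W^\alpha}
  \;\le\; 1 + \int_1^{k^\alpha} 1 \, ds + \int_{k^\alpha}^\infty k\, s^{-1/\alpha} \, ds.
\]
The first two terms contribute $k^\alpha$. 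For the third, since $\alpha \in (0,1)$ we have $1 - 1/\alpha < 0$, so
\[
  \int_{k^\alpha}^\infty k \, s^{-1/\alpha} \, ds
  \;=\; k \cdot \frac{s^{1-1/\alpha}}{1-1/\alpha}\bigg|_{k^\alpha}^\infty
  \;=\; \frac{\alpha \, k \cdot k^{\alpha - 1}}{1-\alpha}
  \;=\; \frac{\alpha \, k^\alpha}{1-\alpha}.
\]
Adding these up yields $\EE{}{W^\alpha} \le k^\alpha + \tfrac{\alpha k^\alpha}{1-\alpha} = \tfrac{k^\alpha}{1-\alpha}$, which is the claimed bound.

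No step is really an obstacle here; the only thing to be careful about is the sign of $1 - 1/\alpha$ (negative since $\alpha < 1$), which makes the tail integral converge and produces the $1/(1-\alpha)$ factor. One could equivalently carry this out by writing $\EE{}{U^{-\alpha}} = \int_0^1 t^{-\alpha} \cdot k (1-t)^{k-1} \, dt = k\, B(1-\alpha, k)$ and invoking standard Beta/Gamma asymptotics, but the tail-integration approach above is self-contained and gives a clean, explicit constant.
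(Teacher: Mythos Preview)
Your argument is correct. The paper's proof takes a slightly different route: it writes the expectation directly as the integral $\int_0^1 u^{-\alpha}\cdot k(1-u)^{k-1}\,du$ using the density of $U=\min_j u_j$, then splits the integral at $u=1/k$, bounding $(1-u)^{k-1}\le 1$ on $[0,1/k]$ and $u^{-\alpha}\le k^\alpha$ on $[1/k,1]$. Your approach instead works on the tail side: you use only the union bound $\Pr[U<1/t]\le k/t$ (not the exact density), apply the layer-cake formula for $\EE{}{W^\alpha}$, and split at $s=k^\alpha$. The two splits are dual to each other, but your version is arguably a bit more robust since it needs only a CDF upper bound rather than the explicit density, and it yields the clean constant $\EE{}{W^\alpha}\le k^\alpha/(1-\alpha)$ with no hidden $O(\cdot)$.
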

\begin{proof}
We compute the expectation directly:
\begin{eqnarray*}
\EE{u_j}{\left(\max_j 1/u_j\right)^\alpha}
& = &
\int_{0}^1 u^{-\alpha}\cdot k(1-u)^{k-1}\du
\\
&\le&
\int_{0}^{1/k} k\cdot u^{-\alpha}\du
+
\int_{1/k}^1 k^\alpha \cdot k(1-u)^{k-1}\du
\\
& = &
k\cdot \Big[ \tfrac{u^{1-\alpha}}{1-\alpha} \Big]_0^{1/k}
+
k^\alpha \Big[ -(1-u)^k \Big]_{1/k}^1 
\\
&\le&
O(\tfrac{k^\alpha}{1-\alpha}).
\end{eqnarray*}
\end{proof}

\end{document}